\crefname{section}{Sec.}{Sec.}
\Crefname{section}{Section}{Sections}
\crefname{proposition}{Prop.}{Prop.}
\Crefname{proposition}{Proposition}{Propositions}
\crefname{definition}{Def.}{Def.}
\Crefname{definition}{Definition}{Definition}
\newcommand{\tr}{\delta}
\newcommand{\pol}{\text{pol}}
\newcommand{\op}{\text{op}}
\newcommand{\da}{\text{da}}
\newcommand{\la}{\text{la}}
\newcommand{\PAR}{\mathrm{par}}
\newcommand{\0}{\mathbf{0}}
\newcommand{\data}{\mathrm{com}}
\newcommand{\branch}{\mathrm{branch}}
\newcommand{\bsc}{\mathrm{bsc}}
\newcommand*{\In}{\mathrm{in}}
\newcommand*{\Out}{\mathrm{out}}
\newcommand{\Labels}{\mathbb{L}}
\newcommand{\lin}{\text{lin}}
\newcommand{\un}{\text{un}}
\newcommand{\Lin}{\mathcal{L}}
\newcommand{\Un}{\mathcal{U}}
\newcommand*{\atom}[1]{\texttt{#1}}
\newcommand*{\lab}[1]{\textit{#1}}
\newcommand*{\Int}{\atom{int}\xspace}
\newcommand*{\Bool}{\atom{bool}\xspace}
\newcommand*{\Real}{\atom{real}\xspace}
\newcommand*{\End}{\mathrm{end}}
\newcommand*{\bind}[2]{\mathop{#1 #2.\,}}
\newcommand*{\askC}[1]{\bind{?}{#1}}
\newcommand*{\putC}[1]{\bind{!}{#1}}
\newcommand*{\extCh}{\&}
\newcommand*{\fix}[1]{\bind{\mu}{#1}}
\newcommand*{\Type}{\mathsf{Type}}
\newcommand*{\Var}{\mathsf{Var}}
\newcommand{\LinVar}{F}
\newcommand{\algo}[3]{#1 \mathrel{\vdash} #2 \mathrel{;} #3}
\newcommand{\Set}{\mathbf{Set}}
\newcommand{\from}{\colon}
\newcommand{\id}{\mathrm{id}}
\newcommand{\Rel}{\mathrm{Rel}}
\DeclareMathOperator{\dom}{dom}
\DeclarePairedDelimiter{\set}{\{}{\}}
\DeclarePairedDelimiterX{\setDef}[2]{\{}{\}}{#1 \; \delimsize| \; #2}
\DeclarePairedDelimiter{\pair}{\langle}{\rangle}
\DeclarePairedDelimiter{\subcoalg}{\langle}{\rangle}
\DeclarePairedDelimiter{\contClosure}{\langle}{\rangle^{\scaleobj{0.7}{\gg}}}
\newcommand*{\all}[1]{\bind{\forall}{#1}}
\newcommand{\comment}[1]{}
\definecolor{branchingColour}{HTML}{D81B60}
\definecolor{dataColour}{HTML}{1E88E5}
\definecolor{contColour}{HTML}{E1A334}
\tikzset{
  session coalg/.style={
    ->
    , >=stealth'
    , auto
    , semithick
    , initial text =
    , label distance = -7pt
  },
  session state/.style={ellipse, text=black},
  external choice/.style={session state, label={[branchingColour]120:$\&$}},
  internal choice/.style={session state, label={[branchingColour]120:$\oplus$}},
  ask channel/.style={session state, label={[branchingColour]120:$?$}},
  put channel/.style={session state, label={[branchingColour]120:$!$}},
  basic int/.style={session state, label={[branchingColour]120:$\Int$}},
  basic bool/.style={session state, label={[branchingColour]120:$\Bool$}},
  basic real/.style={session state, label={[branchingColour]120:$\Real$}},
  end state/.style={session state, label={[branchingColour]120:$\End$}},
  par state/.style={session state, label={[branchingColour]120:$\PAR$}},
  base/.style={draw=black, >=stealth', semithick},
  data/.style={draw=dataColour, >=stealth', semithick},
  cont/.style={
    draw=contColour
    , fill=contColour,
    , >=stealth'
    , dashed
    , semithick},
  rl loop above/.style={out=60,in=95,loop}
}
\newcommand{\transition}[2][]{%
  \tikz[baseline]{\path[->] (0,0.5ex) edge[#2] node {#1} (0.8,0.5ex);}%
}
\begin{document}
\title{Session Coalgebras: A Coalgebraic View on Session Types and Communication Protocols}
\titlerunning{Session Coalgebras: A Coalgebraic View on Session Types}
\author{Alex C. Keizer\inst{1}\orcidID{0000-0002-8826-9607} 
\and
Henning Basold\inst{2}\orcidID{0000-0001-7610-8331} 
\and
Jorge A. P\'{e}rez\inst{3}
}
\authorrunning{A.C. Keizer et al.}
%
\institute{Master of Logic, ILLC, University of Amsterdam, The Netherlands 
\and
LIACS -- Leiden University, The Netherlands \\
\email{h.basold@liacs.leidenuniv.nl}
\and
University of Groningen, The Netherlands \\
\email{j.a.perez@rug.nl}
}

\maketitle

\begin{abstract}

Compositional methods are central to the development and verification of software systems.
They allow to break down large systems into smaller components, while enabling reasoning about the behaviour of the composed system.
For concurrent and communicating systems, compositional techniques based on
\emph{behavioural type systems} have received much attention.
By abstracting communication protocols as types, 
these type systems can statically check that programs interact with channels according to a certain protocol, whether the intended messages are exchanged in a certain order.
In this paper, we put on our coalgebraic spectacles to investigate \emph{session types},
a widely studied class of behavioural type systems.
We provide a syntax-free description of session-based concurrency as states of coalgebras.
As a result, we rediscover type equivalence, duality, and subtyping relations in terms of canonical coinductive presentations.
In turn, this coinductive presentation makes it possible to elegantly derive a decidable
type system with subtyping for $\pi$-calculus processes, in which the states of a
coalgebra will serve as channel protocols.
Going full circle, we exhibit a coalgebra structure on an existing session type system,
and show that the relations and type system resulting from our coalgebraic perspective
agree with the existing ones.


\comment{
Session type systems provide a robust way of checking whether a communication channel is used according to some protocol. Most work on session types, however, tends to be heavily based on syntax, and the syntax used differs between authors. In this thesis, we aim to provide a more structured system, based in the theory of labelled transition systems and coalgebras. We also define bisimulation and simulation on these coalgebras such that they correspond to type equivalence and subtyping, respectively. Furthermore, we define a coalgebra on the syntax of two existing type systems, showing how to relate syntax-based systems to our coalgebraic system, and how to instantiate our type rules on syntactical expressions. Finally, we give a type checking algorithm and show that it will always terminate for a certain class of coalgebras, which includes the coalgebra on expressions.
}

\keywords{Session types   \and Coalgebra \and Process calculi \and Coinduction.}
\end{abstract}

\section{Introduction} \label{sec:introduction}
Communication protocols enable interactions between humans and computers alike, yet different scientific
communities rely on different descriptions of protocols:
one community may use textual descriptions, another uses diagrams, and yet another may use types.
There is then a mismatch, which is fruitful and hindering at the same time.
Fruitful, because different views on protocols lead to different insights and technologies.
But hindering, because exactly those insights and technologies cannot be easily exchanged.
With this paper, we wish to provide a view of protocols that opens up new links
between communities and that, at the same time, contributes new insights into the
nature of communication protocols.

What would such a view of communication protocols be?
Software systems typically consist of concurrent, interacting processes that pass messages
over channels.
Protocols are then a description of the possible exchanges on channels, without ever referring to
the exact structure of the processes that use the channels.
Since we may, for example, expect to get an answer only after sending a question, it is clear that
such exchanges have to happen in an appropriate order.
Therefore, protocols have to be a \emph{state-based abstraction of communication behaviour} on
channels.
Because \emph{coalgebras} provide an abstraction of general state-based behaviour, our proposed view of
communication protocols becomes: model the states of a protocol as states of a coalgebra and let the coalgebra govern the exchanges that may happen at each state of the protocol.

The above view of protocols allows us to model protocols as coalgebras.
However, protocols are usually not studied for the sake of their description but to
achieve certain goals: ensuring correct composition of processes,
comparing communication behaviour, or refining and abstracting protocols.
\emph{Session types}~\cite{DBLP:conf/concur/Honda93,Honda98:TypesCommBasedProgramming} are an approach to communication
correctness for processes that pass messages along channels.
The idea is simple: describe a protocol as a syntactic object (a type), and use a type system to
statically verify that processes adhere to the protocol.
This syntactic approach allows the automatic and efficient verification of many correctness
properties.
However, the syntactic approach depends on choosing \emph{one particular representation of protocols}
and \emph{one particular representation of processes}.
We show in this paper that our coalgebraic view of protocols can guarantee correct process
composition, and allows us to reason about, what would be called in the world of session types,
\emph{type equivalence}, \emph{duality} and \emph{subtyping}, while being completely independent of
protocol and process representations.

Our coalgebraic view is best understood by following the distillation process of ideas on
a concrete session type system by Vasconcelos~\cite{DBLP:journals/eatcs/Vasconcelos11}.
Consider the session type $S = \askC{\Int} \putC{\Bool} \End$, which specifies the protocol on one endpoint of a channel that receives an integer,
then outputs a Boolean, and finally terminates the interaction.
Note that the protocol $S$ specifies three different states: an input state, an output state,
and a final state.
Moreover, we note that $S$ specifies only how the channel is seen from one endpoint, the other
endpoint needs to use the channel with the \emph{dual} protocol $\putC{\Int} \askC{\Bool} \End$.
Thus, session type systems ensure that the states of $S$ are enabled
only in the specified order and that the two  channel endpoints implement dual protocols.



A state-based reading of session types is intuitive and is already present in programming concepts
such as
typestates~\cite{DBLP:journals/toplas/GarciaTWA14,DBLP:journals/tse/StromY86,%
  DBLP:conf/oopsla/SunshineNSAT11},
theories of behavioural contracts~\cite{DBLP:conf/soco/BravettiZ07,DBLP:conf/wsfm/CarpinetiCLP06,%
  DBLP:journals/toplas/CastagnaGP09,DBLP:conf/cav/FournetHRR04}, and
connections between session types and communicating automata~\cite{DBLP:conf/esop/DenielouY12,%
  DBLP:conf/wsfm/LozesV11}.
The novelty and insight of the coalgebraic view is that
\begin{enumerate*}
\item it describes the state-based behaviour of protocols underlying session types, without adhering
  to any specific syntax or target programming model;
\item it offers a general framework in which key notions such as type
  equivalence, duality, and subtyping arise as instances of well-known coinductive constructions; and
\item it allows us to derive type systems for specific process languages, like the $\pi$-calculus.
\end{enumerate*}

\paragraph{Session Coalgebras at Work}

How does this coalgebraic view of protocols work for general session types?
Consider a ``mathematical server'' that offers three operations to clients: integer multiplication,
Boolean negation and quitting.
The following session type $T$ specifies a protocol to communicate with this server.
\begin{equation*}
  T = \fix{X} \extCh
  \begin{cases}
    \lab{mul}: & \askC{\Int} \askC{\Int} \putC{\Int} X
    \\
    \lab{neg}: & \askC{\Bool} \putC{\Bool} X
    \\
    \lab{quit}: & \End
  \end{cases}
\end{equation*}
$T$ is a recursive protocol, as indicated by ``$\fix{X}$'', which can be repeated.
A client can choose, as indicated by $\extCh$, between the three operations ($\lab{mul}$, $\lab{neg}$
and $\lab{quit}$) and the protocol then continues with the corresponding actions.
For instance, after choosing $\lab{mul}$, the server requests two integers and, once received,
promises to send an integer over the channel.
We can see states of the protocol $T$ emerging, and it remains to provide a coalgebraic view
on the actions of the protocol to obtain what we will call \emph{session coalgebras}.

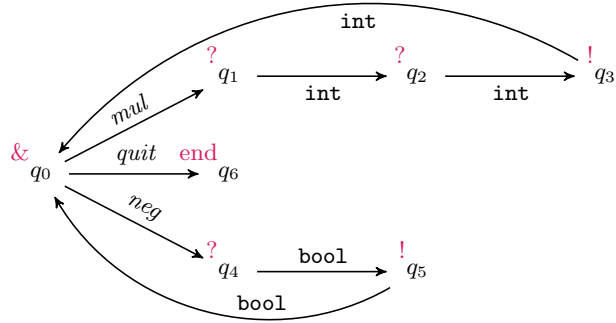
\begin{figure}[ht]
  \centering
  \trimbox{0cm 0.5cm 0cm 0.6cm}{%
    \begin{tikzpicture}[->,>=stealth',shorten >=1pt,auto,node distance=2.5cm,semithick,initial text =
      , label distance = -7pt]
      \tikzstyle{every state}=[fill=none,draw=none,text=black]

      \node[external choice] (0)                              {$q_0$};
      \node[end state]   (6) [right of=0]                     {$q_6$};
      \node[ask channel] (1) [above of=6,node distance=1.3cm] {$q_1$};
      \node[ask channel] (2) [right of=1]                     {$q_2$};
      \node[put channel] (3) [right of=2]                     {$q_3$};
      \node[ask channel] (4) [below of=6,node distance=1.3cm] {$q_4$};
      \node[put channel] (5) [right of=4]                     {$q_5$};

      \path
      (0) edge node[sloped] {$\lab{quit}$} (6)
      (0) edge node[sloped] {$\lab{mul}$}  (1)
      (0) edge node[sloped] {$\lab{neg}$}  (4)
      (1) edge node[swap]   {$\Int$}       (2)
      (2) edge node[swap]   {$\Int$}       (3)
      (3) edge[bend right=40, shorten <=2pt] node[pos=0.4]{$\Int$} (0)
      (4) edge node         {$\Bool$} (5)
      (5) edge[bend left=45, shorten <=2pt] node[pos=0.35, swap]{$\Bool$} (0)
      ;
    \end{tikzpicture}
  }
  \caption{Protocol of mathematical server as session coalgebra}
  \label{fig:math-server-coalg}
\end{figure}
\noindent
\cref{fig:math-server-coalg} depicts a session coalgebra that describes protocol $T$.
It  consists of states $q_0, \ldots, q_6$, each representing a different
state of $T$, and transitions between these states to model the evolution of $T$.
Meaning is given to the different states and transitions through the labels on the states and
transitions.
The state labels, written in purple at top-left of the state name, indicate the branching type of
that state.
Depending on the branching type, the labels of the transitions bear different meanings.
For instance, $q_{0}$ is labelled with ``$\&$'', which indicates that this state initiates an
external choice.
The labels on the three outgoing transitions for $q_{0}$
($\lab{mul}$, $\lab{neg}$, $\mathit{quit}$) correspond then to the possible kinds of message for
selecting one of the branches.
Continuing, states $q_{1}, \dotsc, q_{5}$ are labelled with a request for data (label $?$)
or the sending of data (label $!$), and the outgoing transition labels indicate the type of
the exchanged values (e.g., $\Bool$).
Finally, state $q_{5}$ decrees the end of the protocol.
Note that the cyclic character of $T$ occurs as transitions back to $q_{0}$; there is no need for an explicit  operator to capture recursion.





		

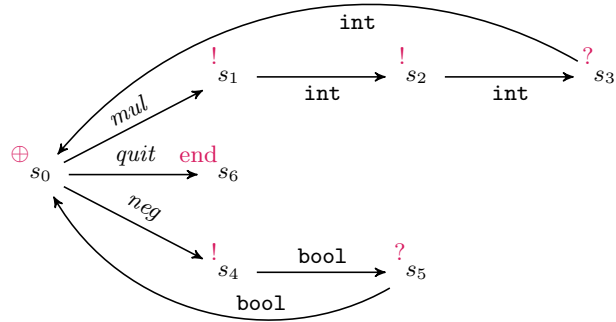
\begin{figure}[ht]
  \centering
  \trimbox{0cm 0.5cm 0cm 0.6cm}{%
    \begin{tikzpicture}[->,>=stealth',shorten >=1pt,auto,node distance=2.5cm,semithick,initial text =
      , label distance = -7pt]
      \tikzstyle{every state}=[fill=none,draw=none,text=black]

      \node[internal choice] (0)                              {$s_0$};
      \node[end state]   (6) [right of=0]                     {$s_6$};
      \node[put channel] (1) [above of=6,node distance=1.3cm] {$s_1$};
      \node[put channel] (2) [right of=1]                     {$s_2$};
      \node[ask channel] (3) [right of=2]                     {$s_3$};
      \node[put channel] (4) [below of=6,node distance=1.3cm] {$s_4$};
      \node[ask channel] (5) [right of=4]                     {$s_5$};

      \path
      (0) edge node[sloped] {$\lab{quit}$} (6)
      (0) edge node[sloped] {$\lab{mul}$}  (1)
      (0) edge node[sloped] {$\lab{neg}$}  (4)
      (1) edge node[swap]   {$\Int$}       (2)
      (2) edge node[swap]   {$\Int$}       (3)
      (3) edge[bend right=40, shorten <=2pt] node[pos=0.4]{$\Int$} (0)
      (4) edge node         {$\Bool$} (5)
      (5) edge[bend left=45, shorten <=2pt] node[pos=0.35, swap]{$\Bool$} (0)
      ;
    \end{tikzpicture}
  }
  \caption{Session coalgebra for the client view protocol the of mathematical server}
  \label{fig:math-client-coalg}
\end{figure}
A session coalgebra models the view on one channel endpoint, but to correctly execute a protocol
we also need to consider the \emph{dual} session coalgebra that models the other endpoint's view.
In our example, the dual of \cref{fig:math-server-coalg} is
given by the diagram in \cref{fig:math-client-coalg}, which concerns states $s_0, \ldots, s_6$.
More precisely, the \emph{states} $q_{i}$ and $s_{i}$ are pairwise dual in the following sense.
The external choice of $q_{0}$ becomes an \emph{internal choice} for $s_{0}$, expressed through the
label $\oplus$, with exactly the same labels on the transitions leaving $s_{0}$.
This means that whenever the server's protocol is in state $q_{0}$ and the client's protocol in
state $s_{0}$, then the client can choose to send one of the three signals to the server, thereby
forcing the server protocol to advance to the corresponding state.
All other states turn from sending states into receiving states and vice versa.
We will see that this \emph{duality relation} between states of session coalgebras has a
natural coinductive description that can be obtained with the same techniques as bisimilarity.
The duality relation for $T$ will give us then the full picture of the intended protocol.


Suppose a client who would only want to use multiplication once but could also handle real numbers
as inputs.
Such a client had to follow the protocol given by the session coalgebra in
\cref{fig:math-client-subtype}, with states $r_0, \ldots, r_5$.
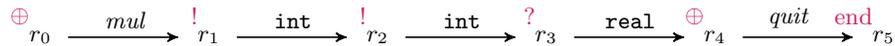
\begin{figure}[ht]
  \centering
  \begin{tikzpicture}[->,>=stealth',shorten >=1pt,auto,node distance=1.5cm,semithick
    , label distance = -7pt, start chain]

    \node[on chain, internal choice]                                     {$r_0$};
    \node[on chain, put channel,     join=by {"$\lab{mul}$"}]            {$r_1$};
    \node[on chain, put channel,     join=by {"$\Int$"}]                 {$r_2$};
    \node[on chain, ask channel,     join=by {"$\Int$"}]                 {$r_3$};
    \node[on chain, internal choice, join=by {"$\atom{real}$"}]          {$r_4$};
    \node[on chain, end state,       join=by {"$\lab{quit}$", pos=0.4}]  {$r_5$};
  \end{tikzpicture}
  \caption{Session coalgebra that uses only part of a mathematical server}
  \label{fig:math-client-subtype}
\end{figure}

\noindent In the terminology of session types, this protocol would be a \emph{subtype} of that in
\cref{fig:math-client-coalg} (cf. \cite{gayhole,DBLP:conf/birthday/Gay16}).
For session coalgebras, we recover the same notion of subtyping by using specific \emph{simulation} relations
that will allow us to prove that the behaviour of $r_{0}$ can be simulated by $s_{0}$.
Simulations and duality together provide the basics of typical session type systems.

We have used thus far session types and coalgebras for protocols with simple control and with
exchanges of simple data values.
In contrast, rich session type systems~\cite{DBLP:journals/eatcs/Vasconcelos11} can regulate
\emph{session delegation}, the dynamic allocation and exchange of channels by processes.
Imagine a process that creates a channel, which should adhere to some protocol $T$.
From an abstract perspective, the process holds both endpoints of the new channel, and has
to send one endpoint to the process it wishes to communicate with.
To ensure statically that the receiving process respects the protocol of this new
channel, we need to announce this communication as a transmission of the session type $T$
via an existing channel and use $T$ to verify the receiving process.
Session delegation adds expressiveness and flexibility, but may cause problems in the
characterisation a correct notion of duality~\cite{duality}.
Remarkably, our coalgebraic view of session types makes this characterization completely natural.


As an example, consider the type $T = \fix{X} \askC{X} X$, which models a channel endpoint that
infinitely often receives channel ends of its own type $T$.
To obtain the dual of $T$, we may naïvely try to replace the receive with a send, which results in
the type $\fix{X} \putC{X} X$.
The problem is that the two channel endpoints would not agree on the type they are sending or
receiving, as any dual type of $T$ needs to send messages of type $T$.
Thus, the correct dual of $T$ would be the type $U = \fix{X} \putC{T} X$.
Both $T$ and $U$ specify the transmission of non-basic types, either the recursion variable $X$ or $T$,
in contrast to the mathematical server that merely stipulated the transmission of basic data values (integers or
Booleans).

In our session coalgebras for the mathematical server it sufficed to have simple data types and
branching labels on transitions. However, to represent $T$ and $U$ we will need another mechanism to
express session delegation.
We observe that a transmission in session types consists of the transmitted data and the session type
the protocol continues with after the transmission took place.
Thus, a transition out of a transmitting state in a session coalgebra encompasses both a \emph{data transition}
and a \emph{continuation transition}.
In diagrams of session coalgebras, we indicate the data transition by a coloured arrow
\transition{data} and an arrow \transition{cont} connecting the data to the continuation transition.
Using the combined transitions, we can redraw the multiplication part of the mathematical server
in \cref{fig:math-server-coalg-2}.
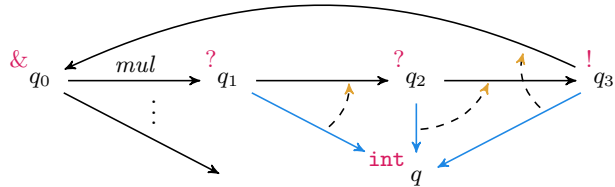
\begin{figure}[ht]
  \centering
  \trimbox{0cm 0.25cm 0cm 0.4cm}{%
    \begin{tikzpicture}[session coalg,node distance=2.5cm]

      \tikzstyle{every state}=[fill=none,draw=none,text=black]

      \node[external choice] (0)                              {$q_0$};
      \node[ask channel] (1) [right of=0] {$q_1$};
      \node[ask channel] (2) [right of=1]                     {$q_2$};
      \node[put channel] (3) [right of=2]                     {$q_3$};
      \node[basic int]   (int) [below of=2, node distance=1.3cm]                     {$q$};
      \node   (cont) [below of=1, node distance=1.3cm]                     {};

      \path
      (0) edge                 node {$\lab{mul}$}  (1)
      (0) edge                 node {\vdots}  (cont)
      ;

      \begin{scope}[every node/.style={inner sep=0em}]
        \path
        (1) edge[data,shorten >=1.5em]          node[name=1-data]               {} (int)
        (1) edge                                node[name=1-cont,pos=0.7]       {} (2)
        (2) edge[data,shorten >=1pt]            node[name=2-data]               {} (int)
        (2) edge                                node[name=2-cont,pos=0.35]      {} (3)
        (3) edge[data,shorten >=1pt]            node[name=3-data,pos=0.25,swap] {} (int)
        (3) edge[bend right=25, shorten <=2pt]  node[name=3-cont,pos=0.1]       {} (0)
        ;
      \end{scope}

      \path[->, dashed, shorten >=1pt,cont]
      (1-data) edge[bend right=25] (1-cont)
      (2-data) edge[bend right] (2-cont)
      (3-data) edge[bend left] (3-cont)
      ;
    \end{tikzpicture}
  }
  \caption{Protocol of mathematical server as session coalgebra}
  \label{fig:math-server-coalg-2}
\end{figure}

As we can see, the transition $q_{1} \mathbin{\transition[$\scriptstyle \Int$]{base, auto}} q_{2}$ has been replaced by \emph{both}
a data transition into a new state $q$ \emph{and} a continuation transition into $q_{2}$.
Moreover, $q$ has been declared as a \emph{data state} that expects an integer to
be exchanged.

Having added these transitions to our toolbox, we can present the two types $T$ and $U$ as
session coalgebras.
The diagram in \cref{fig:recursive-duality-coalg} shows such a session coalgebra, in which we name
the states suggestively $T$ and $U$.
\begin{figure}[ht]
  \centering
  \trimbox{0cm 0.3cm 0cm 0.2cm}{%
    \begin{tikzpicture}[session coalg,node distance=2.5cm]

      \node[put channel] (0)              {$U$};
      \node[ask channel] (1) [right of=0] {$T$};

      \begin{scope}[every node/.style={inner sep=0em}]
        \path
        (0) edge [rl loop above]     node[name=U-cont, pos=0.2] {} (0)
        (0) edge [data]              node[name=U-data, pos=0.3] {} (1)
        (1) edge [rl loop above]     node[name=T-cont, pos=0.2] {} (1)
        (1) edge [data, loop right]  node[name=T-data, pos=0.2] {} (1)
        ;
      \end{scope}

      \path[->, dashed, shorten >=1pt,cont]
      (U-data) edge[bend right=25] (U-cont)
      (T-data) edge[bend right=15] (T-cont)
      ;
  \end{tikzpicture}}
  \caption{Session coalgebra for a recursive type $T$ and its dual $U$}
  \label{fig:recursive-duality-coalg}
\end{figure}
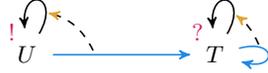



Using this presentation as session coalgebras, it is now straightforward to \emph{coinductively}
prove that the states $T$ and $U$ are dual:
\begin{enumerate*}
\item the states have opposite actions;
\item their data transitions point to equal types; and
\item their continuations are dual by coinduction.
\end{enumerate*}
Clearly, the last step needs some justification but it will turn out that we can appeal to a
standard definition of coinduction in terms of greatest fixed points.
This demonstrates that our coalgebraic view on session types makes the definition of duality truly
natural and straightforward.




Up to here, we have discussed session types and coalgebras that are \emph{linear}, i.e., they enforce that protocols complete exactly once.
In many situations, one also needs \emph{unrestricted} types, which enable sharing of channels between processes that access these channels concurrently.
This is the case of a process that offers a service for other processes, for instance a web server.
Session delegation allows us to create dynamically channels and check their protocols, but the
shared channel for \emph{initiating a session}~\cite{gayhole} has to offer its protocol to an
arbitrary number of clients.
Unrestricted types enable us to specify these kind of service offers.

As an example, consider a process that provides a channel for communicating integers to anyone asking,
like a town hall official handing out citizen numbers.
The type $U = \fix{X} \un \putC{\Int} X$ represents the corresponding protocol, where ``$\un$'' qualifies the
type $\putC{\Int} X$ as unrestricted.
This allows the process holding the end of a channel with type $U$ to transmit an integer to any
process that is connected to the shared channel, without any restriction on their number.
It is now surprisingly simple to express $U$ in our coalgebraic view by introducing a new state
label ``$\PAR$'' (parallel), which expresses that states reached from a $\PAR$ state can be used
arbitrarily as protocols across different processes connecting to the channel.
The following diagram shows a session coalgebra with a state that corresponds to the type $U$.
\begin{center}
  \trimbox{0cm 0.1cm 0cm 0.2cm}{%
  \begin{tikzpicture}[session coalg,node distance=2.5cm,start chain]
    \node[par state, on chain]   (0) {$U$};

    \node[put channel, on chain] (1) {$q_{1}$};
    \node[basic int, on chain]   (3) {$q_{2}$};

    \begin{scope}[every node/.style={inner sep=0em}]
      \path
      (0) edge [bend left=.75cm]   node {} (1)
      (1) edge [data]              node[name=1-data, pos=0.2] {} (3)
      (1) edge [bend left=.75cm]   node[name=1-cont, pos=0.2] {} (0)
      ;
    \end{scope}

    \path[->, dashed, shorten <= 3pt, shorten >=1pt,cont]
    (1-data) edge[bend left=35] (1-cont)
    ;
  \end{tikzpicture}}
\end{center}


\paragraph{Contributions and Related Work.}
In this paper, we introduce the notion of \emph{session coalgebra}, which justifies the state-based behaviour of
session types from a coalgebraic perspective.
This perspective is novel, although specific state-based description of protocols have been considered
before~\cite{%
  DBLP:conf/soco/BravettiZ07,%
  DBLP:conf/wsfm/CarpinetiCLP06,%
  DBLP:journals/toplas/CastagnaGP09,%
  deAlfaro01:InterfaceAutomata,%
  DBLP:conf/esop/DenielouY12,%
  DBLP:conf/cav/FournetHRR04,%
  DBLP:journals/toplas/GarciaTWA14,%
  DBLP:conf/wsfm/LozesV11,%
  DBLP:journals/tse/StromY86,%
  DBLP:conf/oopsla/SunshineNSAT11}.
Using coalgebra as a unifying framework for session types has two advantages:
\begin{enumerate*}
    \item session coalgebras can be defined and studied independently from specific syntactic formulations; and
    \item we can uncover the innate \emph{coinductive} nature of key notions in session types, such as duality,
    subtyping, and type equivalence through standard coalgebraic techniques.
\end{enumerate*}
Coinduction already has been exploited in the definition of type equivalence~\cite{DBLP:conf/icfp/ThiemannV16},
subtyping~\cite{gayhole,DBLP:conf/birthday/Gay16} and, especially,
duality for systems with recursive types~\cite{DBLP:journals/corr/BernardiH13,duality,DBLP:conf/icfp/LindleyM16}.
Unlike ours, these previous definitions are language-dependent, as they are tailored
to specific process languages and/or syntactic variants of the type discipline.
Session coalgebras enable thus the generalisation of insights and technologies from specific languages to any
protocol specification that fits under the umbrella of state-based sessions.

To enable the verification of processes against protocols described by session coalgebras, we also contribute
a \emph{type system} for $\pi$-calculus processes, in which channel types are given by states of an
\emph{arbitrary session coalgebra}.
Moreover, we provide a \emph{type checking algorithm} for that system, given that the underlying session coalgebra
fulfils two intuitive conditions.
We then revisit Vasconcelos' system~\cite{vasc} from our coalgebraic perspective, while extending it with subtyping.
In doing so, we show how a specific type syntax can be equipped with a session coalgebra structure and how the
two decidability conditions are reflected in the type system.
Coalgebras have been used in~\cite{DBLP:conf/birthday/ToninhoY19} to encode coinductive session types in
a session type system with parametric polymorphism~\cite{DBLP:conf/esop/CairesPPT13}.
This approach  starts with a specific type syntax and then employs category theoretical ideas.
In contrast, we start with the coalgebraic perspective and show how a session type system can be derived in general.

\paragraph{Organisation}

Throughout the remaining paper we will turn the above sketched ideas into a coalgebraic framework.
We introduce in \cref{sec:sessiontypes} a concrete session type syntax that we will use as
illustration of our framework.
In \cref{sec:coalgebra}, we will define session coalgebras as coalgebras for an appropriate functor
and show that the type system from \cref{sec:sessiontypes} can be equipped with a coalgebra
structure.
The promised coinductive view on type equivalence, duality, subtyping etc. will be provided in
\cref{sec:relations}.
Moreover, we will show that these notions are decidable under certain conditions that hold for
any reasonable session type syntax, including the one from \cref{sec:sessiontypes}.
Up to that point, the session coalgebras only had intrinsic meaning and were not associated to
any process representation.
\Cref{sec:typerules} sets forth a type system for $\pi$-calculus, in which channels are assigned
states of a session coalgebra as types.
The resulting type system features subtyping and algorithmic type checking, presented in  \cref{sec:algorithmic_type_checking}.
Some final thoughts are gathered in \cref{sec:conclusion}.
\emph{The appendices collect additional material.}

\section{Session Types}\label{sec:sessiontypes}
To motivate the development of session coalgebras, we recall in this section the concrete syntax
of an existing session type system by Vasconcelos~\cite{vasc}.
After building up our intuition, we introduce session coalgebras in \cref{sec:coalgebra} to show
they can represent this concrete type system.

\begin{figure}[ht]
\begin{equation*}
	\begin{array}[t]{rcl}
	    p   & ::= & \askC{T} T \\
	        & \mid& \putC{T} T \\
	        & \mid & \& \{ l_i : T_i  \}_{i \in I} \\
		    & \mid & \oplus \{ l_i : T_i  \}_{i \in I} \\
		\\
		q & ::= & \lin \mid \un
	\end{array}
	\qquad
	\begin{array}[t]{rcl}
		T & ::= & d \in D \\
		  & \mid & \End \\
		  & \mid & q\,p . T \\
		  & \mid & X \in \Var \\
		  & \mid & \mu X . T
	\end{array}
\end{equation*}
\captionof{figure}{Session types over sets of basic data types $D$ and of variables $\Var$}
\label{fig:grammars}
\end{figure}
The types of the system that we will be using are generated by the grammar in \cref{fig:grammars},
relative to a set of basic data types $D$ and a countable set of type variables $\Var$.
This grammar has three syntactic categories: \emph{p}retypes, \emph{q}ualifiers, and session
\emph{T}ypes.
A pretype $p$ is simply a communication action: send ($!$), receive ($?$), external choice ($\&$), and internal choice ($\oplus$) indexed by a finite sets $I$ of labels.
The simplest session types are basic data types in $D$ and the completed, or terminated, protocol
represented by $\End$.
A session type can be prefixed by a qualified pretype, written as
$\bind{q \,}{p} T$. 
The $\lin$ qualifier enforces that the communication action $p$ has to be carried out by exactly one process, while the $\un$ qualifier allows arbitrary use of $p$.
Finally, we can form recursive session types with the the fixed point operator $\mu$ and the
use of type variables.
We use the usual notion of $\alpha$-equivalence, (capture-avoiding) substitution, and free and
bound types variables for session types.

Although the grammar allows arbitrary recursive types, we further require types to be
\emph{contractive} and \emph{closed}, which means that they contain no substrings of the form
$\mu X_1. \mu X_2 \ldots \mu X_n . X_1$ and no free type variables.
We let $\Type$ be the set of all $T$ adhering to these conditions.

To lighten up notation, we will usually omit the qualifier $\lin$ and assume every type to finalise with $\End$.
With these conventions, we write, e.g., $\askC{\Int}$ instead of $\lin \askC{\Int} \End$ and $\un \askC{\Int}$ for a single unrestricted read.


We assume there is some decidable subtyping preorder $\le_D$ over the basic types.
A type is a subtype of another if the subtype can be used anywhere where the supertype was accepted.
In examples, we use the basic types \Int{}, \texttt{real} and \texttt{bool}, and we assume that \Int{}
is a subtype of \texttt{real}, as usual.

\comment{
\subsection{Subtyping}
In these grammars, $d$ refers to basic types: \Int, \texttt{bool}, \texttt{Object}, types that aren't session types. We deliberately leave these unspecified, so it is easy to implement session types on top of an existing (static) type system. The set of all basic types, i.e., all values $d$ can take in the grammars, will be represented by $D$.  We do assume there is some decidable subtyping preorder\footnote{A preorder is a relation that is reflexive and transitive} relation, written as $\le_D$, on them. A trivial subtyping relation can be constructed through identity, assuming identity is decidable.


Subtyping is a relation between types that describes when one type (say, \Int) is can be used wherever another (such as \texttt{real}) was expected. If \Int is a subtype of \texttt{real}, then all \Int values can be seen as valid \texttt{real}s, so any function expecting a \texttt{real} argument can safely be called with an \Int value. Subclassing, as featured in object oriented programming languages, is a well-known form of subtyping. Subtyping for session types was introduced by Gay and Hole in \cite{gayhole}.

Intuitively, if a process $P$ is well-formed with a channel of type $T =\; !\Int.\End$, it must only ever send an \Int value over the channel. Every such value is also valid for a channel of type $U =\; !\texttt{real}.\End$. After substituting $U$ for $T$ as type of the channel in the process $P$, it remains well-formed. Therefore, type $U$ is a subtype of $T$. Notice that if output is changed to input, the order of the two types in the relation is switched: although $U$ is a subtype of $T$, the dual $T' =\ ?\Int.\End$ is a subtype of $U' =\ ?\texttt{real}.\End$.

Imagine a client of our mathematical service that is only interested in multiplying natural numbers (yet is still capable of handling integer responses), its type would be:
\[
client2 = \mu X. \:\oplus \left\{ \begin{array}{ll}
	mul:  	& 	![\texttt{nat},\texttt{nat}].?\Int.X \\
	quit: 	&	end  \\
\end{array}
\right.
\]

Under the assumption that \texttt{nat} is a subtype of \Int, the $mul$ branch in $client2$ is a subtype of the $mul$ branch in $client$. Additionally, the $neg$ option is left out of the choice, but both $mul$ and $quit$ were part of the original $client$ type. A well-formed process will never send $neg$ over a channel of type $client2$, but that same process would still be well-formed if the extra option was allowed on its channel. Here too, the direction of input or output matters. The subtype of an \emph{internal} choice may limit its options and the subtype of an \emph{external} choice can offer extra options.

This new $client2$ type is a subtype of the original $client$ type, which was dual to the $server$ type. This means that a process using $client2$ and a process using the $server$ type can communicate with each other. We did not need to change the server to accommodate a specialized client.

Two types $T$ and $U$ are called \emph{type-equivalent} if they are both subtypes of each other (i.e., $T$ is a subtype of $U$ and $U$ is a subtype of $T$). Because either type can be used where the other was expected, they are freely interchangeable.

Subtyping allows us to type check a process with a single session type (like $!\Int.\End$), and know that it is guaranteed to be well-formed for every subtype (such as $!\texttt{real}.\End$). Because of this guarantee, we can be more flexible in the type rules, reducing the amount of rejected processes that can be shown to execute without error.
}




An important notion is the \emph{unfolding} of a session type, which we define next:
\begin{definition}[Unfolding]
    The unfolding of a recursive type $\mu X . T$ is defined recursively
    \[ \mathit{unfold}(\mu X . T) = \mathit{unfold} (T [\mu X.T / X ])\]
    For all other $T$ in $\Type$, $\mathit{unfold}$ is the identity: $\mathit{unfold}(T) = T$.
\end{definition}
Because we assume that types are contractive,  $\mathit{unfold}(T)$ terminates for all $T$.
Also, because all types are required to be closed, $\mathit{unfold}(T)$ can never be a variable $X$.
Any such variable would have to be bound somewhere before use, meaning it would have been substituted.
Furthermore, unfolding a closed type always yields another closed type, as each removed binder always
causes a substitution of the bound variable.

\section{Session Coalgebra}\label{sec:coalgebra}
Here we will discuss \emph{session coalgebras}, the main contribution of this paper.
The idea is that session coalgebras will be coalgebras for a specific functor $F$, which will
capture the state labels and the various kinds of transitions that we discussed in
\cref{sec:introduction}.
An important feature of coalgebras in general, and session coalgebras in particular, is that
the states can be given by an arbitrary set.
We will leverage on this to define a session coalgebra on the set of types $\Type$ introduced in
\cref{sec:sessiontypes}.

Before coming to the definition, let us briefly recall some minimal notions of category theory.
We will require a lot of category theoretical terminology; in fact, we will only use the category $\Set$ of sets and functions.
Moreover, we will be dealing with \emph{functors} $F \from \Set \to \Set$ on the category $\Set$.
Such a functor allows us to map a set $X$ to a set $F(X)$, and functions $f: X \to Y$ to a
functions $F(f): F(X) \to F(Y)$. 
To be meaningful, a functor must preserve identity and compositions.
That is, $F$ maps the identity function $\id_X \from X \to X$ on $X$ to the identity on $F(X)$:
$F(\id_X) = \id_{F(X)}$; and, given functions $f \from X \to Y$ and $g \from Y \to Z$, we must have
$F(g \circ f) = F(g) \circ F(f)$.

A central notion is that of the \emph{coalgebras} for a functor $F$.
A \emph{coalgebra} is given by a pair $(X, c)$ of a set $X$ and a function $c \from X \to F(X)$.
For simplicity, we often leave out $X$ and refer to $c$ as the coalgebra.
The general idea is that the set $X$ is the set of \emph{states} and that $c$ assigns to every state its one-step behaviour.
In the case of session coalgebras this will be the state labels and outgoing transitions.
Given two coalgebras $c \from X \to F(X)$ and $d \from Y \to F(Y)$, we say that $h \from X \to Y$
is a \emph{homomorphism}, if $d \circ h = F(h) \circ c$.
Coalgebras and their homomorphisms form a category, with the same identity maps and
compositon as in $\Set$.

We will have to analyse subsets of coalgebras that are closed under transitions.
Given a coalgebra $c \from X \to F(X)$, we say that $d \from Y \to F(Y)$ with $Y \subseteq X$ is
a \emph{subcoalgebra} of $c$ if the inclusion $Y \to X$ is a coalgebra homomorphism.
Note that in this case $c(Y) \subseteq F(Y)$ and thus $d$ is the restriction of $c$ to $Y$.
Hence, we also refer to $Y$ as subcoalgebra.
The subcoalgebra \emph{generated by} $x \in X$ in $c$, denoted by $\subcoalg{x}_c$, is the least subset
of $X$ that contains $x$ and is a subcoalgebra of $c$.

Coming to the concrete case of session coalgebras, we now construct a functor that allows us to capture the state labels and
the different kinds of transitions.
Keeping in mind that states of a session coalgebra correspond to states of a protocol, we need to be able
to label the states with enabled operations.

\begin{definition}[Operations and Polarities]
    The \emph{operation} of a state describes what kind of action it represents:
    $\data$ marks the transmission (sending or receiving) of a value;
    $\branch$ an (internal or external) choice;
    $\End$ the completed protocol;
    $\bsc$ a basic data type; and
    $\PAR$ an unrestricted (parallel) type.
    States that transmit data, labelled with $\data$, or allow for choice, labelled with $\branch$, also have a \emph{polarity},
    which can be either $\In$ (a receiving action or external choice) or $\Out$ (a sending action or internal choice).
    We let $O$ be the set of all operations
    $O = \{\data,\allowbreak \branch,\allowbreak \End,\allowbreak \bsc,\allowbreak \PAR \}$
    and $P$ the set of polarities $P = \{\In, \Out\}$.
\end{definition}

Note that pairs in $\{\data, \branch\} \times P$ directly correspond to the actions of a session type:
$? = (\data, \In)$, $! = (\data, \Out)$, $\& = (\branch, \In)$ and $\oplus = (\branch, \Out)$.
We will be using these markers to abbreviate the pairs.

Now that we have the possible operations of a protocol, we need the to define the transitions that may follow
each operation.
Recall that the transition at a choice state has to be labelled with messages that resolve that choice.
We therefore assume to be given a set $\Labels$ of possible choice labels.
The variable $l$ will be used to refer to an element of $\Labels$.
$\mathcal{P}^+_{<\aleph_0}(\Labels)$ is the set of all finite, non-empty, subsets of $\Labels$.
Variables $L, L_1, L_2,\ldots$ refer to these finite, non-empty subsets of $\Labels$.



Our goal is to define what is called a \emph{polynomial functor}~\cite{gambino09} that captures the states labels
and transitions.
This requires some further formal language.
First, we let $\mathbbm{1}$ be the singleton set $\{*\}$ with exactly one element $*$.
Second, given sets $X$ and $Y$, we denote by $X^Y$ the set of all (total) functions from $Y$ to $X$.
Finally, given a family of sets $\set{X_i}_{i \in I}$ indexed by some set $I$, their \emph{coproduct} is the
set $\coprod_{i \in I} X_i = \{ (i, x) \mid i \in I, x \in X_i \}$.

We are now ready to define session coalgebras:
\begin{definition}[Session Coalgebras]
\label{d:sesscoal}
Let $A$ and $B$ be sets defined as follows, where we recall that $D$ is the set of all basic data types.
\[
\begin{array}{rlcl}
A = & \{\data\} \times P                              &\quad&      B_{\data,p} = \{*,1\} \\
    & \mathrel{\cup} \{\branch\} \times P \times \mathcal{P}^+_{<\aleph_0}(\Labels)            &&      B_{\branch,p,L} = L \\
    & \mathrel{\cup} \{\End\}                                                             &&      B_{\End} = \emptyset\\
    & \mathrel{\cup} \{\bsc\} \times D                                                   &&      B_{\bsc,d} = \emptyset\\
    & \mathrel{\cup} \{\PAR\}                                                             &&      B_{\PAR} = \mathbbm{1}\\
\end{array}
\]
The polynomial functor $F: \Set \to \Set$ is defined by
\begin{align*}
    \SwapAboveDisplaySkip
    F(X)    & = \displaystyle\coprod _{a \in A} X^{B_a} \\
    F(f)(a, g) & = (a, f \circ g)
\end{align*}
A coalgebra $(X,c)$ for the functor $F$ is called a \emph{session coalgebra}.
\end{definition}

Let us unfold this definition.
Given a session coalgebra $c \from X \to F(X)$ and a state $x \in X$, we find in $c(x) \in F(X)$ the
information of $x$ encoded as a tuple $(a, f)$ with $a \in A$ and $f \from B_a \to X$.
From $a$, we get directly the operation, and the polarity for $\data$ states, the type of values communicated
for $\bsc$ states or the message labels of $\branch$ states.
The function $f$ encodes the transitions out of $x$ and we may write $x \xrightarrow{l} y$ if $f(l) = y$.
The domain of $f$ is exactly the set of labels that have a transition, and is dependent on the kind of state
declared by $a$.

It will be beneficial to partition the domain of the transition map $f$ into data and continuations.
Notice how only $\data$ states have data transitions, for other states, all transitions are continuations.
Let us, as usual, write $\dom(f)$ for the domain of $f$.
\begin{definition}[Domains]
    Suppose $c(x) = (\data, p, f)$, then the \emph{data domain} of $f$ is
    $dom_D(f) = \{1\}$
    and the \emph{continuation domain} is
    $dom_C(f) = \mathbbm{1}$.
    In all other cases, $dom_D(f) = \emptyset$ and $dom_C(f) = dom(f)$.
\end{definition}

\subsection{Alternative Presentation of Session Coalgebras}

Session coalgebras $(X, c)$ are rather complex.
We show how to build up $c$ as the combination of two simpler functions,
denoted $\sigma$ and $\tr$, so that  $c(x) = (\sigma(x), \tr(x))$
with $\sigma \from X \to A$ and $\tr(x) \from B_{\sigma(x)} \to X$.
Observe that every state gets an operation in $O$ assigned, thus we may assume that there
is a map $\op \from X \to O$.
Depending on the operation given by $\op(x)$, the label on $x$ will then have different
other ingredients that are captured in the following proposition.

To formulate the proposition, we need some notation.
Suppose $h \from X \to I$ is a map and $i \in I$.
We define the \emph{fibre} $X^f_i$ of $f$ over $i$ to be
$X^f_i = \setDef{x \in X}{f(x) = i}$.
Moreover, we let the \emph{pairing of functions} $f$ and $g$ be
$\pair{f, g}(x) = (f(x), g(x))$.

\begin{proposition}
\label{d:auxfun}
A session coalgebra $(X, c)$ can equivalently be expressed by providing the following maps:
\[ 
\def\arraystretch{1.2}
\begin{array}{rll}
    \op  &: X \to O                  & \text{maps each state to an operation } \\
    \pol &: X^{\op}_{\data} + X^{\op}_{\branch} \to P \quad  & \text{maps $\data$ and $\branch$ states to a polarity } \\
    \la  &: X^{\op}_{\branch} \to \mathcal{P}^+_{<\aleph_0}(\Labels) & \text{maps $\branch$ states to a set of labels} \\
    \da  &: X^{\op}_{\bsc} \to D              &     \text{maps $\bsc$ states to their basic type} \\
    \tr_a &: X^{\sigma}_a \to X^{B_a}  & \text{maps each state to a transition function},
\end{array}
\]
where
\[
    \sigma(x) = \begin{cases}
        \pair{\op, \pol}(x) & \text{if } \op(x) = \data \\
        \pair{\op, \pol, \la}(x) & \text{if } \op(x) = \branch \\
        \pair{\op, \da}(x) & \text{if } \op(x) = \bsc \\
        \op(x) & \text{if } \op(x) = \End \text{ or } \op(x) = \PAR
    \end{cases}
\]
\end{proposition}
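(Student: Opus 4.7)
The plan is to exhibit an explicit bijection between session coalgebras $c \from X \to F(X)$ and tuples $(\op, \pol, \la, \da, (\tr_a)_{a \in A})$ of maps satisfying the domain/codomain constraints listed in the statement, and then verify that the two constructions are mutually inverse. Unfolding $F(X) = \coprod_{a \in A} X^{B_a}$, every value $c(x)$ is a pair $(a, f)$ with $a \in A$ and $f \from B_a \to X$; the proof essentially consists of decomposing $a$ along the disjoint-union structure of $A$ and matching the pieces with $\op$, $\pol$, $\la$, $\da$.

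First I would define the extraction direction. Given $c \from X \to F(X)$, write $c(x) = (a(x), f(x))$. Since $A$ is a disjoint union of five summands indexed by $O$, there is a canonical projection $A \to O$, and I define $\op = A\text{-projection} \circ a$. Then, on the fibre $X^{\op}_{\data}$, the value $a(x)$ is a pair $(\data, p) \in \{\data\} \times P$ from which I read off $\pol(x) = p$; on the fibre $X^{\op}_{\branch}$ the value $a(x)$ lies in $\{\branch\} \times P \times \mathcal{P}^+_{<\aleph_0}(\Labels)$, yielding $\pol(x)$ and $\la(x)$; on $X^{\op}_{\bsc}$ the second component of $a(x)$ gives $\da(x)$. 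With these definitions, $a(x) = \sigma(x)$ by the case formula in the statement. For each $a \in A$, I then set $\tr_a \from X^{\sigma}_a \to X^{B_a}$ to be the restriction of $f$ to the fibre $X^{\sigma}_a$; this typechecks because on that fibre the second component of $c(x)$ is a function $B_a \to X$.

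Next I would define the reconstruction direction. Given the maps, set
\[
  c(x) = \bigl(\sigma(x),\, \tr_{\sigma(x)}(x)\bigr).
\]
The case analysis defining $\sigma$ guarantees $\sigma(x) \in A$, and the typing of $\tr_{\sigma(x)}$ on the fibre $X^{\sigma}_{\sigma(x)}$ guarantees $\tr_{\sigma(x)}(x) \in X^{B_{\sigma(x)}}$, so $c(x) \in F(X)$ as required. Because $A$ is a coproduct, specifying a function $X \to F(X) = \coprod_{a \in A} X^{B_a}$ is equivalent to giving, on each fibre $X^{\sigma}_a$, a function into $X^{B_a}$; this is precisely the datum of $\tr_a$.

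Finally, I would check that the two operations are mutual inverses. Starting from $c$, extracting the maps and then rebuilding reproduces $c$ since $\sigma(x) = a(x)$ and $\tr_{\sigma(x)}(x) = f(x)$ on every fibre. Conversely, starting from a tuple of maps and building $c$, the extracted $\op$, $\pol$, $\la$, $\da$ are recovered by projecting $\sigma(x)$ along the coproduct decomposition of $A$, and each $\tr_a$ is recovered by restricting the transition component of $c$ back to the fibre $X^{\sigma}_a$. The main bookkeeping obstacle is just keeping the five cases of $A$ straight and checking that each piece of data is placed in the correct summand; there is no genuine mathematical difficulty, only the need to make the coproduct decomposition of $A$ and the fibrewise description of maps into a coproduct fully explicit.
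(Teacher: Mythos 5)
Your proposal is correct and follows essentially the same route as the paper, which justifies the proposition by exactly this two-directional construction: rebuilding $c(x) = (\sigma(x), \tr_{\sigma(x)}(x))$ from the maps, and conversely reading $\op$, $\pol$, $\la$, $\da$, $\tr_a$ off the coproduct decomposition of $A$ and the fibres of $\sigma$. You merely make the mutual-inverse check and the fibrewise description of maps into a coproduct more explicit than the paper does.
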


We specified $\tr_a$ as a family of transition functions to preserve each specific signature. We can define a single global transition function as $\tr(x) = \tr_{\sigma(x)}(x)$. This is how the coalgebra finally becomes $c(x) = (\sigma(x), \tr(x))$. As long as the provided maps fit their signatures, this derived function will conform to $c: X \to F(X)$.

The procedure also works backwards: given any session coalgebra, we can derive functions $\op(x)$, $\pol(x)$, etc. from $c(x)$.
We will often use $\op(x)$, $\sigma(x)$ and  $\tr(x)$ to refer to those specific parts of an
arbitrary session coalgebra.

\subsection{Coalgebra of Session Types}
In \cref{sec:introduction}, we informally explained how session types can be represented as states of a session coalgebra.
We will now justify this claim by showing that session types are, in fact, states of a specific session coalgebra $(\Type, c_{\Type})$.

We define the functions $\op, \pol, \tr$, and $\la$, see \cref{d:auxfun}, on $\Type$.
Using \cref{d:auxfun}, we can then derive $c_{\Type}: \Type \to F(\Type)$.
Let us begin with the linear types.
\[\def\arraystretch{1.5}
\begin{array}{c|c|c|c|c}
\multirow{2}{*}{$T$}       & \multicolumn{4}{c}{c_{\Type}(T)} \\
& \op(T)& \pol(T)& \tr(T) & \la(T) \\ \hline
\lin \askC{T} T'	& \multirow{2}{*}{$\data$}	& in & \multirow{2}{*}{$\def\arraystretch{1}\begin{array}{rl}
    \tr(T)(*)   &= T'   \\
    \tr(T)(1)   &= T \\
\end{array}$} &  \\
\lin \putC{T} T'	&   & out & & 		\\ \hline
\lin \extCh \{l_i:T_i\}_{i \in I} 		& \multirow{2}{*}{$\branch$} & in	& \multirow{2}{*}{$\tr(T)(l_i) = T_i$} & \multirow{2}{*}{$\{l_i \mid i \in I \}$} \\
\lin\oplus\{l_i:T_i\}_{i \in I}	&&  out &\\ 
\end{array}
\]
Under this definition, $\la(T)$ is indeed finite, by virtue of an expression being a finite string. The completed protocol $\End$ and basic types $d$ are straightforward: $c(\End) = (\End)$ and $c(d) = (\bsc, d)$ for any $d \in D$. Recursive types are handled according to their unfolding, $c(\fix{X} T) = c(\mathit{unfold}(\fix{X} T))$.
Recall that contractivity ensures that $\mathit{unfold}$ always terminates. As our types are closed, all recursion variables are substituted during the unfolding of their binder. Consequently, we do not need to define $c$ on these variables.

Session types can also be unrestricted, and consist of a pretype $p$ with a qualifier $\un$. 
Session coalgebras have $par$ states to mark unrestricted types; the continuation describes what the actual interaction is. 
Thus, we define $\op(\un\ p) = par$ and $\tr(\un\ p)(*) = \lin\ p$.

\subsection{Alternative Syntaxes and their functors}
The notion of unrestricted session types that we have adopted is fairly standard, but it is not the only one in the literature. Most notably, Gay and Hole~\cite{gayhole} defined a type $\widehat{\ }[T_1,\ldots,T_n]$ that allows infinite reading \emph{and} writing. To allow for such behaviour in session coalgebra, we can change $B_{par}$ to a set of two elements, such a  $\{*_1, *_2\}$. Like internal choice, the two transitions describe an option of which behaviour to follow, but without sending synchronization signals. One transition could go to a read, and the other to a write, both recursively continuing as the original type $\widehat{\ }[T_1,\ldots,T_n]$.

It is possible, although not entirely trivial, to change the further definitions appropriately and get a decidable type checking algorithm encompassing both the syntax presented in this work, and Gay and Hole's syntax. We choose not to, so that we can keep the presentation simpler.

\section{Type Equivalence, Duality and Subtyping}\label{sec:relations}
Up to here, we have represented session types as session coalgebras, but we have not yet given a precise semantics to them. 
As a first step, we will define three relations on states: \emph{bisimulation}, \emph{duality}, and \emph{simulation}. 
Bisimulation is also called behavioural equivalence for types and we will show that bisimilar types are indeed equivalent. 
Duality specifies complementary types: it tells us which types can form a correct interaction. 
Simulation will provide a notion of subtyping: it tells us when a type can be used where another type was expected.
Besides relations on session coalgebras, we also introduce the \emph{parallelizability} of states that allows us to rule
out certain troubling unrestricted types.
Finally, we will obtain conditions on coalgebras to ensure the decidability of the three relations and therefore the type
system that we derive in \cref{sec:typerules}.

In the following, we will denote by $\Rel_X$ the poset $\mathcal{P}(X \times X)$ of all relations on $X$
ordered by inclusion.
Recall that a post-fixpoint of a monotone map $g \from \Rel_X \to \Rel_X$ is a relation $R \in \Rel_X$ with $R \subseteq g(R)$.
Note that $\Rel_X$ is a complete lattice and that therefore any monotone map has a greatest post-fixpoint by the
Knaster-Tarski Theorem~\cite{tarski1955}.
We will define bisimulation, simulation, and duality as the greatest (post-)fixpoint of monotone functions, which we
will therefore call \emph{coinductive definitions}.
This definition turns out to be intuitively what we would expect and the interaction of infinite behaviour with other
type features is automatically correct.
The coinductive definitions also give us immediately proof techniques for equivalence, duality and subtyping: to show
that two states are, say, dual we only have to establish a relation that contains both states and show that the relation
is a post-fixpoint.
This technique can then be improved in various ways~\cite{Pous-UpToComplLattices} and we will show that it is decidable
for reasonable session coalgebras.

\subsection{Bisimulation}
Two states of a coalgebra are said to be \emph{bisimilar} if they exhibit equivalent behaviour. We abstract away from the precise structure of a coalgebra and only consider its observable behaviour.
Two states are bisimilar if their labels are equal and if the states at the end of matching transitions are again bisimilar.
There is one exception to the equality of labels: basic types can be related via their pre-order, which does not have to
coincide with equality.

Fix some coalgebra $(X, c)$ and let $c^*: \Rel_{F(X)} \to \Rel_X$ be the binary preimage of $c$ defined as 
\begin{equation*}
    c^*(R) = \setDef{ (x, y)}{ (c(x), c(y)) \in R } \, .
\end{equation*}
\begin{definition}
    We define the function $f_\sim: \Rel_X \to \Rel_{F(X)}$ 
    as
\begin{align*}
f_\sim(R) = {} & \{\; ((a, f), (a, f'))  \mid (\forall\alpha \in dom(f)) \quad  f(\alpha) \mathrel{R} f'(\alpha) \}
\\
{} \cup {}  &   \{\; ((\bsc, d, f_\emptyset), (\bsc, d', f_\emptyset))  \mid d \le_D d' \land d' \le_D d  \;\}
\end{align*}
\end{definition}
It can be easily checked that, both, $c^*$ and $f_\sim$ are monotone maps and thus also their composition.
Thus, the greatest fixpoint in the following definition exists.

\begin{definition}
    A relation $R$ is called a bisimulation if it is a post-fixpoint of $c^* \circ f_{\sim}$.
    The greatest fixpoint is the bisimilarity relation $\sim$.
\end{definition}


\subsection{Duality}
Duality describes exactly opposite types in terms of their polarity. That is, the dual of input is output and the dual of output is input: $\overline{in} = out$ and $\overline{out} = in$. 
We can extend this to tuples $a$ in $A$, see \cref{d:sesscoal}, with the exception of basic types because they do not describe channels:
\begin{align*}
        \overline{(\data, p)} &= (\data, \overline{\,p\,})        &
            \overline{(end)} & = (end)                              \\
        \overline{(\branch, p, L)} &= (\branch, \overline{\,p\,}, L)    &
            \overline{(par)} & = (par) \\
        \overline{(\bsc, d)} & \text{ is undefined}
\end{align*}
The next step is to compare transitions. 
Continuations of $dom_C(f)$ need to be dual.
The data types that are sent or received need to be equivalent, hence transitions of $dom_D(f)$ need to go to bisimilar states. 
We capture this idea with the monotone map $f_\bot: \Rel_X \to \Rel_{F(X)}$ defined as follows.
\[
\def\arraystretch{1.2}
\begin{array}{rll}
    f_\bot(R) =     & \bigg\{\; ((a, f), (\overline{\,a\,}, f')) 
        & \bigg| \def\arraystretch{1}\begin{array}{l}
                (\forall\alpha \in dom_C(f)) \quad f(\alpha) \mathrel{R} f'(\alpha) \text{ and} \\
                (\forall\beta \in dom_D(f)) \quad f(\beta) \mathrel{\sim} f'(\beta) 
            \end{array} \;\bigg\}  \\
\end{array}
    \]
    

\begin{definition}
    A relation $R$ is called a \emph{duality relation} if it is a post-fixpoint of $c^* \circ f_{\bot}$
    and the greatest fixpoint is the \emph{duality} $\bot$.
\end{definition}

It is useful to have a function mapping any $x \in X$ to their dual $\overline{x}$, as long as duality is defined on $x$. However, even if duality is defined on $x$, the dual state might not be in $X$. Thus, we define the \emph{dual closure} of $X$ as the set $X^\bot = X \mathop\cup \{ \overline{x} \mid \overline{\sigma(x)} \text{ is defined} \}$, where $\overline{x}$ is understood to be an arbitrary state not in $X$ and distinct from $\overline{y}$ for any states $y \in X$ with $x \not= y$. For any of the original states, $c^\bot(x) = c(x)$, but for the new states we define $\sigma^\bot(\overline{x}) = \overline{\sigma(x)}\textbf{}$ and
\[
    \begin{array}{l}
        \delta^\bot(\overline{\,x\,})(\alpha) = \overline{\delta(x)(\alpha)} \quad\text{for all $\alpha \in dom_C(f)$, and} \\
        \delta^\bot(\overline{\,x\,})(\beta) = \delta(x)(\beta) \quad\text{for all $\beta \in dom_D(f)$} 
    \end{array} \\
\]
Thus, the dual closure is a coalgebra such that $x \mathrel\bot \overline{x}$ for any $\overline{x}$.
Notice that taking a dual twice always yields a bisimilar type, so we can define the duality function as an involution, $\overline{\overline{\,x\,}} = x$, rather than adding more variables. Clearly, the dual closure of a finite set is finite.

\begin{proposition}\label{prop:duality_fun}
    $x \mathrel\bot \overline{x}$ for every state $x$ such that $\overline{x}$ is defined.
\end{proposition}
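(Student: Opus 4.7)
The plan is to exhibit a single post-fixpoint of $c^{\bot*} \circ f_\bot$ that contains every pair of the form $(x,\overline{x})$, and then appeal to the Knaster--Tarski characterisation of $\bot$ as the greatest such post-fixpoint. Specifically, take
\[
R \;=\; \{\,(x, \overline{x}) \mid x \in X^\bot \text{ and } \overline{x} \text{ is defined}\,\}.
\]
Note that $R$ is symmetric in the informal sense that, because the bar is defined to be an involution ($\overline{\overline{x}} = x$), for any $(x, \overline{x}) \in R$ the pair $(\overline{x}, x)$ also sits in $R$, seen as $(\overline{x}, \overline{\overline{x}})$. This will be needed when we iterate the argument on continuations whose dual happens to lie back in $X$.

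Next I would verify $R \subseteq c^{\bot*}(f_\bot(R))$ by unfolding the construction of $c^\bot$. Fix $(x, \overline{x}) \in R$; by definition of the dual closure, $\sigma^\bot(\overline{x}) = \overline{\sigma(x)}$, so the two labelling components of $c^\bot(x)$ and $c^\bot(\overline{x})$ are already of the shape $(a,\,\overline{a})$ required by $f_\bot$ (the case $\op(x) = \bsc$ is ruled out by the assumption that $\overline{x}$ is defined). It then remains to check the two conjuncts in the definition of $f_\bot$ against the transition components. For every $\alpha \in \mathrm{dom}_C(\delta(x))$, we have $\delta^\bot(\overline{x})(\alpha) = \overline{\delta(x)(\alpha)}$ by construction, so the pair $(\delta(x)(\alpha),\, \delta^\bot(\overline{x})(\alpha))$ lies in $R$. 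For every $\beta \in \mathrm{dom}_D(\delta(x))$, we have $\delta^\bot(\overline{x})(\beta) = \delta(x)(\beta)$, and thus the required condition reduces to $\delta(x)(\beta) \sim \delta(x)(\beta)$, which follows from reflexivity of bisimilarity (itself an easy consequence of $\sim$ being a greatest fixpoint: the diagonal is a bisimulation). Having established both conditions, $(c^\bot(x), c^\bot(\overline{x})) \in f_\bot(R)$, i.e.\ $R \subseteq c^{\bot*}(f_\bot(R))$. By Knaster--Tarski this forces $R \subseteq {\bot}$, giving the claim.

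The main obstacle is not the post-fixpoint calculation itself, which is mostly bookkeeping, but the subtle matter of ensuring that the dual closure $X^\bot$ is actually rich enough to close under transitions: when we claim $\overline{\delta(x)(\alpha)}$ exists for every continuation $\alpha$, we are implicitly relying on the dual closure construction being iterated until every reachable continuation whose operation admits a dual receives a formal dual state. It is worth spelling out this reachability/closure argument explicitly, perhaps by induction on the generation of $X^\bot$, before invoking it; after that the rest of the verification is straightforward case analysis over $\op(x) \in \{\data, \branch, \End, \PAR\}$, using reflexivity of $\sim$ on the data fibre and membership in $R$ on the continuation fibre.
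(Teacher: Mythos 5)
Your proof is correct and follows essentially the same route as the paper, which obtains the proposition directly from the construction of the dual closure: the relation pairing each state with its formal dual is, by definition of $\sigma^\bot$ and $\delta^\bot$, a post-fixpoint of $c^* \circ f_\bot$ (using reflexivity of $\sim$ on the data fibre), hence contained in $\bot$ by Knaster--Tarski. Your extra care about whether $\overline{\delta(x)(\alpha)}$ is actually defined for every reachable continuation is a legitimate subtlety that the paper leaves implicit, but it does not constitute a different approach.
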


\subsection{Simulation and Subtyping}
Intuitively, a coalgebra simulates another if the behaviour of the latter ``is contained in'' the former. Subtyping, originally defined on session types by Gay and Hole, is a notion of substitutability of types~\cite{DBLP:conf/birthday/Gay16}. We will define our notion of simulation such that it coincides with subtyping, just like bisimulation provides a notion of type equivalence~\cite{gayhole}.

Consider a process that expects a channel of type $T =\ ?\texttt{real}$. The process reads a value, and expects it to be a real number and treat it as such. 
We defined \Int as a subtype of \texttt{real}, so the process can operate correctly if it receives an integer instead; that is, $?\Int$ is a subtype of $T$.
Now consider  a process that expects a channel of type $!\Int$, on which it can send any integer. This time we cannot restrict the channel to a subtype: as all integers are valid where real numbers are expected, we can generalize the channel type to $!\texttt{real}$.

\begin{figure}[ht]
    \centering
\[
\def\arraystretch{1.15}
\begin{array}{rll}
    h_\sqsubseteq(R) \mathrel{=}& \{\; ((\data, in, f), (\data, in, g))
        & \mid f(*) \mathrel{R} g(*) \text{ and } f(1) \mathrel{R} g(1) \;\} 
        \\
    \mathrel{\cup}&  \{\; ((\data, out, f), (\data, out, g))        
        & \mid f(*) \mathrel{R} g(*) \text{ and } g(1) \mathrel{R} f(1) \;\} 
        \\
    \mathrel{\cup}  & \{\; ((\branch, in, L_1, f), \\
        & \hspace*{\fill}(\branch, in, L_2, g))  
    & \mid L_1 \subseteq L_2 \text{ and } \all{l \in L_1} f(l) \mathrel{R} g(l)  \;\} \\
    \mathrel{\cup}  & \{\; ((\branch, out, L_1, f), \\
        & \hspace*{\fill}(\branch, out, L_2, g))
    & \mid L_2 \subseteq L_1 \text{ and } \all{l \in L_2} f(l) \mathrel{R} g(l) \;\} \\
    \mathrel{\cup}  & \{\; ((\bsc, d, f_\emptyset), (\bsc, d', f_\emptyset)) &\mid  d \mathrel{\le_D} d' \;\} \\
    \mathrel{\cup}  & \{\; ((end, f_\emptyset), (end, f_\emptyset))    \;\} \\
    \mathrel{\cup}  & \{\; ((par, f), (par, g)) & \mid f(*) \mathrel{R} g(*) \text{, and }  \PAR(f(*)) \text{ iff } \PAR(g(*)) \;\}
\end{array}
\]
    \caption{Monotone map $h_{\sqsubseteq} \from \Rel_X \to \Rel_{F(X)}$ that defines simulations}
    \label{fig:subtyping-map}
\end{figure}

Now, in the input case the session types are related (in the subtyping relation) in the same order as the data types; this is called \emph{covariance}. For output, the order is reversed; this is called \emph{contravariance}. 
The same idea holds for labelled  choices: the subtype of an external choice can have a subset of choices, while the subtype of an internal choice can add more options. For all types, it holds that states reached through transitions are covariant, i.e., if $T$ is a subtype of $U$, continuations of $T$ must be subtypes of continuations (of the same label) of $U$. 
The monotone map $h_\sqsubseteq$ in \cref{fig:subtyping-map} captures these ideas formally.

\begin{definition}
    A relation $R$ is called a \emph{simulation} if it is a post-fixpoint of $h_\sqsubseteq$.
    We call the greatest fixpoint \emph{similarity} and denote it by $\sqsubseteq$.
\end{definition}

\begin{figure}[ht]
  \centering
  \trimbox{0cm 0.25cm 0cm 0.2cm}{%
    \begin{tikzpicture}[session coalg,node distance=2.5cm]
      \node[put channel]     (2)                                         {$s_2$};
      \node[ask channel]     (3)   [right of=2]                          {$s_3$};
      \node[internal choice] (0)   [right of=3]                          {$s_0$};
      \node[basic int]       (int) [below of=2, node distance=1.3cm]     {$q_\Int$};
      
      \node[basic real]      (real) [below of=3, node distance=1.3cm]    {$q_\Real$};
      \node[put channel]     (r2)   [below of=int, node distance=1.3cm]  {$r_2$};
      \node[ask channel]     (r3)   [below of=real, node distance=1.3cm] {$r_3$};
      \node[internal choice] (r4)   [right of=r3]                        {$r_4$};
      
      \node (s1) [left of=2, node distance=1.3cm] {$\dotsm$};
      \node (r1) [left of=r2, node distance=1.3cm] {$\dotsm$};
      \node (s22) [right of=0, node distance=1.3cm] {$\dotsm$};
      \node (r5) [right of=r4, node distance=1.3cm] {$\dotsm$};
      \path
      (s1) edge (2)
      (r1) edge (r2)
      (0) edge (s22)
      (r4) edge (r5)
      ;

      \begin{scope}[every node/.style={inner sep=0em}]
        \path
        (2) edge[data,shorten >=1pt]            node[name=2-data]               {} (int)
        (2) edge                                node[name=2-cont,pos=0.35]      {} (3)
        (3) edge[data,shorten >=1pt]            node[name=3-data,pos=0.25,swap] {} (int)
        (3) edge[shorten <=2pt]  node[name=3-cont,pos=0.3]       {} (0)
        (r2) edge[data,shorten >=1pt]            node[name=r2-data]               {} (real)
        (r2) edge                                node[name=r2-cont,pos=0.35]      {} (r3)
        (r3) edge[data,shorten >=1pt]            node[name=r3-data,pos=0.25,swap] {} (real)
        (r3) edge[shorten <=2pt]  node[name=r3-cont,pos=0.3]       {} (r4)
        ;
      \end{scope}
      
      \path[->, dashed, shorten >=1pt,cont]
      (2-data) edge[bend right] (2-cont)
      (3-data) edge[bend right, shorten <=2pt] (3-cont)
      (r2-data) edge[bend left] (r2-cont)
      (r3-data) edge[bend left] (r3-cont)
      ;
      
      \path[->, dotted]
      (r2) edge[bend left = 60] node[sloped,swap] {$\sqsubseteq$} (2)
      (r3) edge[bend right = 40] node[sloped,swap] {$\sqsubseteq$} (3)
      (r4) edge node[sloped,swap] {$\sqsubseteq$} (0)
      (int) edge node[swap] {$\sqsubseteq$} (real)
      ;
    \end{tikzpicture}
  }
  \caption{Simulation for two mathematical server clients (indicated by dotted arrows)}
  \label{fig:math-client-subtype-simulation}
\end{figure}
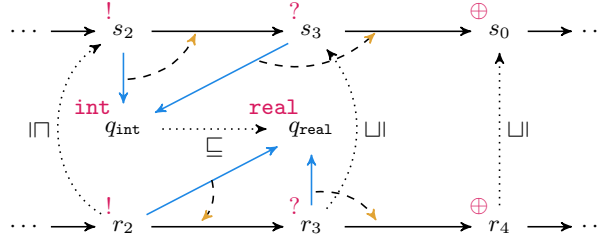

Let us illustrate similarity by means of an example.
\begin{example}
Recall the two client protocols for our mathematical server in
\cref{fig:math-client-coalg,fig:math-client-subtype}.
We can now prove our claim that the latter can also connect to the server because it is a subtype of the client
protocol in \cref{fig:math-client-coalg}.
To do that, we have to establish a simulation relation between the states of both client protocols.
In \cref{fig:math-client-subtype-simulation}, we display a part of both session coalgebras
side-by-side and indicate with dotted arrows the pairs that have to be related by a simulation relation
to show that these states are similar, that is, related by $\sqsubseteq$.
It should be noted that we simulate states from the second coalgebra by that of the first, that is,
we show $r_k \sqsubseteq s_k$ for the shown states.
There is one exception to this, namely $q_\Int \sqsubseteq q_\Real$.
\end{example}

The following proposition records some properties of and tight connections between the relations that
we introduced.
\begin{proposition}
\label{prop:bisim_iff_double_sim}
\label{prop:properties_of_bisim}
Bisimilarity $\sim$ is an equivalence relation,
duality $\bot$ is symmetric, and
similarity $\sqsubseteq$ is a preorder.
Moreover, for all states $x$, $y$ and $z$ of a session coalgebra, we have that
\begin{enumerate}
    \item $x \mathrel\sim y$ iff $x \mathrel\sqsubseteq y$ and $y \sqsubseteq x$; 
    \item $x \mathrel\bot y$ and $x \mathrel\bot z$ implies $y \mathrel\sim z$; and
    \item $x \mathrel\bot y$ and $y \mathrel\sim z$ implies $x \mathrel\bot z$ .
\end{enumerate}
\end{proposition}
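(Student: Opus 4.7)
The overarching plan is to exploit the fact that all three relations are greatest (post-)fixpoints of monotone maps on $\Rel_X$, so every claim reduces to exhibiting a suitable witness relation and verifying that it is a post-fixpoint of the relevant map. I will start with the basic algebraic properties of $\sim$, $\bot$, and $\sqsubseteq$, and then leverage them to prove the three displayed items in order.

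For the basic properties, each required witness is built by an operation on existing relations. The diagonal $\Delta_X$ is immediately a bisimulation, a duality witness cannot be reflexive (since $\bot$ is not required to be reflexive) but $\Delta_X$ is a simulation. For symmetry of $\sim$, I check that $R^{-1}$ is a bisimulation whenever $R$ is: each clause of $f_\sim$ is visibly symmetric in its two arguments (note that $d \le_D d' \land d' \le_D d$ is symmetric by definition). For transitivity of $\sim$, I show that $R \circ S$ is a bisimulation whenever $R$ and $S$ are, using transitivity of $\le_D$ in both directions for the basic-type clause. The same composition trick gives transitivity of $\sqsubseteq$; here I inspect each clause of $h_\sqsubseteq$ separately, checking that covariance, contravariance, label-subset inclusion, and $\le_D$ all compose correctly. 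Symmetry of $\bot$ follows by showing $R^{-1}$ is a duality relation: the operator $\overline{\cdot}$ on $A$ is an involution wherever it is defined, continuations are related by the same $R^{-1}$, and data transitions land in $\sim$, which is already symmetric.

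For item~1, I prove the two inclusions separately. The direction $\sim \,\subseteq\, \sqsubseteq$ asks that $\sim$ be a simulation; going case by case through $h_\sqsubseteq$, the only nontrivial point is the contravariant data slot in the output clause and the contravariant label inclusion in the internal-choice clause, both of which are handled by the symmetry of $\sim$ established above. The converse inclusion is the more interesting one: I take $R = \sqsubseteq \cap \sqsupseteq$ and show it is a bisimulation. Here I need that the two-sided simulation forces operations to coincide (they do: the clauses of $h_\sqsubseteq$ never relate distinct operations), forces label sets to coincide (from $L_1 \subseteq L_2$ and $L_2 \subseteq L_1$), and forces the $\le_D$-kernel on basic data, which is exactly the basic-type clause of $f_\sim$. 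Continuations relate by $R$ again, which is the coinductive step.

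For items~2 and~3, I exhibit witness relations in the familiar relational-chase style. For item~2, let $R = \{(y,z) \mid \exists x,\; x \bot y \text{ and } x \bot z\}$; I show $R$ is a bisimulation. If $(y,z) \in R$ via some $x$, then $c(x)$ is $f_\bot$-related to both $c(y)$ and $c(z)$, so both $c(y)$ and $c(z)$ carry the operation $\overline{\sigma(x)}$ (in particular, $\sigma(x)$ is not $\bsc$, so the double-dual is well defined and equals the original tag). Continuation transitions of $y$ and $z$ are both dual to those of $x$, hence witness $R$ again; data transitions of $y$ and $z$ are $\sim$-related to those of $x$, and since $\sim$ is an equivalence relation, they are $\sim$-related to each other. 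For item~3, I take $R = \{(x,z) \mid \exists y,\; x \bot y \text{ and } y \sim z\}$ and verify it is a duality relation by a parallel case analysis, relying on the fact that $\sim$ composes with $\bot$ on the continuation side (because bisimilarity preserves operations and transition structure) and on the data side (because $\sim$ is transitive).

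The main obstacle will be bookkeeping in item~1's backward direction and in item~2, where one must carefully separate the continuation slots from the data slots and correctly apply the different relations ($R$ for continuations, $\sim$ for data). Beyond that, every step is an unfolding of the monotone operator on the relevant clause of $A$, so the proofs are systematic once the witness relations are chosen as above.
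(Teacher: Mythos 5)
Your proposal is correct and follows essentially the same route as the paper: both establish every claim by exhibiting a witness relation and checking that it is a post-fixpoint of the relevant monotone map (the paper's appendix is simply terser --- it dismisses item~3 with ``proven similarly'' and does not spell out the equivalence/preorder/symmetry arguments at all, while your item-1 backward witness $\sqsubseteq \cap \sqsupseteq$ is exactly the paper's ``symmetrical simulation''). The only point to tighten is the witness for item~2: since the data slot $1$ lies in $\dom(f)$ and $f_\sim$ requires \emph{all} of $\dom(f)$ to land back in the witness, you should take $R \cup {\sim}$ rather than $R$ alone --- a repair you have already implicitly anticipated in your closing remark about applying $\sim$ to the data slots.
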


\subsection{Parallelizability}\label{sec:parallel}
Unlike a linear endpoint, a channel endpoint with an unrestricted type may be   shared between different parallel processes; each of them uses it independently, without informing the others. 
Furthermore, there is no way to coordinate which process receives which message. 
If the unrestricted endpoint sends a message, it could be read by a process that just started using the channel, or by a process that is almost done using the channel, or by a process that is anywhere in between.

In practice, this means an unrestricted channel can only perform one kind of communication action. 
However, session coalgebras allow us to define arbitrarily complex unrestricted types.
For example, $\fix{X} \un\askC{\Int} \un\askC{\Int}X$ is an element of $\Type$, but we know it cannot be used without errors.

\begin{definition}\label{d:par}
    Given a coalgebra $(X, c)$, some subset $Y \subseteq X$ is \emph{parallelizable}, written $\PAR(Y)$, if $Y$ is
    a subcoalgebra of $c$ and for every $x$ and $y$ in $Y$ one of the following holds:
    $x \sim y$, $\sigma(x) = \PAR$, or $\sigma(y) = \PAR$. 
\end{definition}

We know that $\PAR$ states do not represent communications; any other states, though, have to represent the same kind of action. We make this slightly stronger by requiring they are pairwise bisimilar. 

Often we are interested in the parallelizability only of a specific state.

\begin{definition}
Let $\contClosure{x}_c$ be the smallest subset of $\subcoalg{x}_c$ that contains
$x$ and is closed under continuation transitions:
\begin{equation*}
    \contClosure{x}_c = \bigcap \setDef{Y \subseteq X}{
        x \in Y \text { and } \delta(y)(\alpha) \in Y \text{ for all $y \in Y$ and $\alpha \in dom_C(\delta(y))$ }
    }
\end{equation*}
    A state $x$ is parallelizable, written $\PAR(x)$, if $\contClosure{x}_c$ is parallelizable.
\end{definition}

\subsection{Decidability}
In a practical type checker, we need an algorithm to decide the relations defined above. In this subsection we show an algorithm that computes the answer in finite time for a certain class of types.



\begin{definition}
    A coalgebra $c$ is \emph{finitely generated} if $\subcoalg{x}_c$ is finite for all $x$.
\end{definition}

This restriction is not problematic for types, as the following lemma shows.
\begin{lemma}\label{lem:finite_types}
    The coalgebra of types $(\Type, c_{\Type})$ is finitely generated.
\end{lemma}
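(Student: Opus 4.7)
The plan is to show that, starting from any closed contractive type $T$, the subcoalgebra $\subcoalg{T}_{c_\Type}$ is contained in a set whose size is bounded by the number of syntactic subterms of $T$. Since $T$ is a finite string, this immediately yields finiteness.

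To make this precise, I would first introduce, for each syntactic subterm $S$ of $T$, a canonical ``closure'' $S^\star$ obtained by replacing every free type variable $X$ occurring in $S$ by its binding subterm $\mu X . S'$ in $T$. Because $T$ is closed, every free variable of any subterm is bound somewhere in $T$, so $S^\star$ is always well-defined and closed. Let
\[
    R(T) \;=\; \{\, S^\star \mid S \text{ is a syntactic subterm of } T \,\}.
\]
Since the set of syntactic subterms of a finite expression is finite, $|R(T)| \le |\mathrm{sub}(T)| < \infty$. Note also that $T^\star = T$ since $T$ is closed, so $T \in R(T)$.

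The heart of the argument is to prove that $R(T)$ is closed under the transition map $\tr$ of $c_\Type$, which then forces $\subcoalg{T}_{c_\Type} \subseteq R(T)$. I would proceed by a case analysis on the operation of a state $U = S^\star \in R(T)$. If $S$ is a prefixed type $q\,\askC{S_1}S_2$ (or dually $\putC{S_1}S_2$), then $U = q\,\askC{S_1^\star}S_2^\star$, and the data and continuation transitions point to $S_1^\star$ and $S_2^\star$, both of which lie in $R(T)$. The branching cases are analogous. The cases $\End$ and basic data types have no transitions to worry about. The interesting case is when $S$ is a $\mu$-type, say $S = \mu X . S'$, so that $U = (\mu X . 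S')^\star$. By definition of $c_\Type$ on recursive types, we apply $\mathit{unfold}$, which repeatedly substitutes the binder for $X$ inside $S'$. The key observation is that one step of unfolding yields $S'[U / X]$, which is exactly $S'^\star$ (because, under the grafting that defines $(-)^\star$, the previously free occurrences of $X$ in $S'$ bound by the outer $\mu X$ are replaced by $U$ itself). Iterating unfolding just peels off further $\mu$s at the top, and by contractivity this process terminates after finitely many steps at some subterm whose $(-)^\star$-closure is again in $R(T)$.

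Having established that $R(T)$ contains $T$ and is closed under $\tr$, it is a subcoalgebra of $c_\Type$ containing $T$, so $\subcoalg{T}_{c_\Type} \subseteq R(T)$ and hence is finite. The main obstacle I anticipate is the bookkeeping around the interaction between unfolding and the $(-)^\star$ closure: one has to verify carefully that the substitution performed by $\mathit{unfold}$ agrees with the grafting used to define $S^\star$, and that contractivity is what guarantees $\mathit{unfold}$ terminates before leaving $R(T)$. Once that lemma is nailed down, the rest of the argument is routine case analysis on the grammar in \cref{fig:grammars}.
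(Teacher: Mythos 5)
Your overall strategy is the same as the paper's: bound the states reachable from $T$ by a finite set built from the syntactic subterms of $T$. In fact, your treatment of recursion via the closure $S^\star$ is more explicit than the paper's own argument, which only asserts that every transition goes to a ``strictly smaller subexpression'' --- a claim that is literally false for unfoldings of $\mu$-types and only becomes true under exactly the grafting you spell out. So the $\mu$-case of your proof, including the observation that one unfolding step sends $(\mu X.S')^\star$ to $S'^\star$ and that contractivity bounds the number of such steps, is sound and is the right way to make the paper's sketch precise.

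There is, however, one concrete gap: as defined, $R(T)$ is \emph{not} closed under the transition out of an unrestricted type. The coalgebra $c_\Type$ sets $\op(\un\ p) = \PAR$ and $\tr(\un\ p)(*) = \lin\ p$, so a state of the form $\un\,\askC{S_1^\star}S_2^\star$ has a \emph{single} continuation transition to the state $\lin\,\askC{S_1^\star}S_2^\star$, not transitions to $S_1^\star$ and $S_2^\star$ as your case ``prefixed type $q\,\askC{S_1}S_2$'' asserts uniformly in $q$. Since $\lin\,\askC{S_1}S_2$ need not occur as a syntactic subterm of $T$ when only $\un\,\askC{S_1}S_2$ does, this target can escape $R(T)$. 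The repair is exactly what the paper's phrase ``at most 2 states per subexpression'' is accounting for: enlarge $R(T)$ so that for every unrestricted subterm $\un\,p$ it also contains the linear variant of its closure; the transitions out of that added state then land in $R(T)$ by your existing cases, and the bound becomes twice the number of subterms, which is still finite. With that amendment your argument goes through.
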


The determine whether two states $x$ and $y$ are bisimilar, we need to determine if there exists a bisimulation $R$ with $x \mathrel{R} y$. We start with the simplest relation $R = \{(x, y)\}$, and ask if this is a bisimulation.

First, we check that for all $(u, w) \in R$, $\sigma(u) = \sigma(w)$, or in the case of $\bsc$ states that $\da(u) \le_D \da(w)$ and $\da(w) \le_D \da(u)$. If $\sigma(u) \not= \sigma(w)$ for any pair in $R$ we know that no superset of $R$ is a bisimulation, and the algorithm rejects.

Second, we check the matching transitions. For every $(u, w) \in R$ and $\alpha \in dom(\delta(u))$ we check whether $(\delta(u)(\alpha), \delta(w)(\alpha)) \in R$. If we encounter a missing pair, we add it to $R$ and ask whether this new relation is a bisimulation, i.e., return to the first step. If all destinations for matching transitions are present in $R$, then $R$ is, by construction, a bisimulation containing $(x, y)$. Hence, $x \sim y$.
 
This algorithm tries to construct the smallest possible bisimulation containing $(x, y)$, by only adding strictly necessary pairs. If the algorithm rejects, there is no such bisimulation; hence, $x \not\sim y$.
 
 \comment{
The algorithm to determine whether two states $x$ and $y$ are bisimilar creates a relation $R_0 = \{(x, y)\}$ and tries to close it under the appropriate transitions until it is either a bisimulation or can be demonstrated to never become a bisimulation.
\begin{enumerate}
    \item If $R_i$ is a postfixpoint of $c^* \circ f_\sim$, it must be a subset of the bisimilarity relation and all related pairs (including $(x, y)$) are bisimilar
    \item If it is not, there must be some pair $(a, b)$ for which either:
        \begin{itemize}
            \item Some transitioned pair $(a', b') \not\in R_i$: take $R_{i+1} = R_i \cup \{(a', b')\}$ and return to step 1, or
            \item Some requirement not related to $R$ was not met, which cannot be fixed and tells us that some required pair (if it was not required, it would have not been added to $R_i$) cannot be bisimilar, so $x$ and $y$ are not bisimilar
        \end{itemize}
\end{enumerate}
}

The above described algorithm can be suitably adapted to similarity and duality, which gives us the following
result.
\begin{theorem}
\label{thm:bisimulation_decidability}
Bisimilarity, similarity, and duality of any states $x$ and $y$ are decidable if $\subcoalg{x}_c$
and $\subcoalg{y}_c$ are finite.
Parallelizability of any state $x$ is decidable if $\contClosure{x}_c$ is finite.
\end{theorem}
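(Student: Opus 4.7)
The plan is to turn the informal algorithm sketched before the theorem into a uniform termination-and-correctness argument, and then reduce duality and parallelizability to bisimilarity.

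For bisimilarity, fix $x, y$ and let $W = \subcoalg{x}_c \cup \subcoalg{y}_c$, which is finite by hypothesis. The algorithm maintains a relation $R \subseteq W \times W$, starting from $R_0 = \{(x,y)\}$, and at each step inspects every pair $(u,w) \in R$: if the local shapes $\sigma(u), \sigma(w)$ are incompatible under the clauses defining $f_\sim$ (i.e.\ different operations, polarities, or label sets, or basic types not related by $\le_D$ in both directions), reject; otherwise, for every $\alpha \in \dom(\tr(u)) = \dom(\tr(w))$, add the pair $(\tr(u)(\alpha), \tr(w)(\alpha))$ to $R$. Termination is immediate: $R$ only grows, only ever adds elements of the finite set $W \times W$, so after at most $|W|^2$ steps the process stabilises. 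For correctness, note that on acceptance $R$ is by construction a post-fixpoint of $c^* \circ f_\sim$ containing $(x,y)$, so $x \sim y$. Conversely, an easy induction shows that every pair $(u,w)$ ever added to $R$ lies in every bisimulation containing $(x,y)$; hence if the local check fails at some pair, no such bisimulation exists, and $x \not\sim y$.

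The same scheme works for similarity with $h_\sqsubseteq$ in place of $f_\sim$: the clauses of \cref{fig:subtyping-map} only ever require adding pairs whose components are successors of already-related states, so they remain in $\subcoalg{x}_c \times \subcoalg{y}_c$, and local checks involve only $\sigma$, the label-set inclusions, and $\le_D$ (decidable by assumption). For duality the algorithm uses $f_\bot$: continuation transitions produce new pairs to add to $R$, while data transitions trigger a bisimilarity query on two states in $W$, which by the previous paragraph is decidable (and terminates, since its workspace is again a subset of $W \times W$). Parallelizability of $x$ reduces to finitely many bisimilarity checks: enumerate all pairs $(u,w) \in \contClosure{x}_c \times \contClosure{x}_c$, and for each verify $u \sim w$ or $\sigma(u) = \PAR$ or $\sigma(w) = \PAR$; since $\contClosure{x}_c \subseteq \subcoalg{x}_c$ each such state has a finite generated subcoalgebra, so each bisimilarity check terminates.

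The main obstacle is the correctness direction for rejection. It is pointwise but requires the following invariant, proved by induction on the construction of $R$: if $S$ is any bisimulation (resp.\ simulation, duality relation) containing $(x,y)$, then $R \subseteq S$. The inductive step exploits the fact that post-fixpoints of $c^*\circ f_\sim$, $h_\sqsubseteq$, and $c^*\circ f_\bot$ are closed under the successor operations the algorithm uses to grow $R$, so all added pairs must already lie in $S$. Given this invariant, failure of a local check at some $(u,w) \in R$ witnesses that no such $S$ can exist, which yields the negative answer and completes the proof.
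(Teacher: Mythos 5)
Your proposal is correct and follows essentially the same route as the paper: grow the candidate relation from $\{(x,y)\}$ inside the finite set $\subcoalg{x}_c \times \subcoalg{y}_c$, accept when it becomes a post-fixpoint, reject when a local check fails, and handle similarity and duality analogously while reducing parallelizability to finitely many bisimilarity checks. You are somewhat more explicit than the paper about why rejection is sound (the invariant that every added pair must belong to any bisimulation containing $(x,y)$), which the paper only states informally as ``adding strictly necessary pairs,'' but this is a refinement of the same argument rather than a different one.
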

\begin{corollary}
    Bisimilarity, similarity, and duality are decidable for $c_\Type$.
\end{corollary}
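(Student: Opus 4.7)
The plan is to obtain the corollary as an immediate combination of \cref{lem:finite_types} and \cref{thm:bisimulation_decidability}. Concretely, fix two session types $T, U \in \Type$. By \cref{lem:finite_types}, the coalgebra $(\Type, c_\Type)$ is finitely generated, so both $\subcoalg{T}_{c_\Type}$ and $\subcoalg{U}_{c_\Type}$ are finite. This is exactly the hypothesis of \cref{thm:bisimulation_decidability}, so I can read off decidability of $T \sim U$ and of $T \sqsubseteq U$ without any further work.

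For duality, a small additional check is warranted, since the decision procedure for $\bot$ operates on the dual closure rather than on $(\Type, c_\Type)$ itself. Here I would recall the remark following \cref{prop:duality_fun} that the dual closure of a finite set is finite: the construction of $X^\bot$ adds at most one new state $\overline{x}$ per original state $x$ (and then identifies $\overline{\overline{\,x\,}}$ with $x$). Hence, starting from the finite subcoalgebra generated by $T$ and $U$, the dual closure is still finite, and the generic duality-decidability part of \cref{thm:bisimulation_decidability} applies.

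There is genuinely no further obstacle; the substantive content of the corollary lives entirely in \cref{lem:finite_types} (where one must argue that $\mathit{unfold}$ only produces finitely many distinct subterms modulo the contractivity and closedness assumptions) and in \cref{thm:bisimulation_decidability} (the termination analysis of the closure-based algorithm). The proof of the corollary itself is thus a one-line specialisation: for any $T, U \in \Type$, the finiteness hypotheses of the theorem hold by the lemma, hence bisimilarity, similarity, and duality at $(T, U)$ are all decidable.
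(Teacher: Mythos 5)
Your proposal is correct and matches the paper's (implicit) argument exactly: the corollary is an immediate specialisation of \cref{thm:bisimulation_decidability} using \cref{lem:finite_types}, and the paper offers no further proof. The only superfluous step is your detour through the dual closure --- the decision procedure for $\bot$ in the paper compares $\sigma(u)$ against $\overline{\sigma(w)}$ directly on the generated subcoalgebras and never needs to construct $X^\bot$ --- but this extra check is harmless.
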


\section{Typing Rules}\label{sec:typerules}
Session types are meant to discipline the behavior of the channels of an interacting process, so as to ensure that prescribed protocols are executed as intended. 
Up to here, we have focused on session types (i.e., their representation as session coalgebras and coinductively-defined relations on them) without committing to a specific syntax for processes. 
This choice is on purpose: our goal is to provide a truly syntax-independent justification for session types.
In this section, we introduce a syntactic notion of processes and rely on session coalgebras to define the typing rules for a session type system. 

\subsection{A Session $\pi$-calculus}\label{sec:picalculus}
The $\pi$-calculus is a formal model of interactive computation in which processes exchange messages along  channels (or names)~\cite{DBLP:journals/iandc/MilnerPW92a,DBLP:books/daglib/0004377}.  
As such, it is an abstract framework in which key features such as  name mobility,  (message-passing) concurrency, non-determinism, synchronous communication, and infinite behavior have rigorous syntactic representations and precise operational meaning. We consider a \emph{session} $\pi$-calculus based on \cite{vasc,gayhole}, i.e., a variant of the $\pi$-calculus whose operators are tailored to the protocols expressed by session types.

\begin{figure}[t]
\small
\[\scriptstyle
    \begin{array}{rclr}
    P, Q & ::=  &  
    \overline{x}\langle y \rangle . P         & \text{output } y \text{ on channel } x \\
             & \mid &  x(y) . P                                 & \text{bind input from channel } x \text{ to variable } y \\
         & \mid & x \rhd \{ l_i: P_i\}_{i \in I}            & \text{offer choices } l_1, l_2, \ldots\\
         & \mid & x \lhd l.P                                & \text{make choice } l\\
         & \mid & P \mid Q                                  & \text{composition} \\
         & \mid & !P                                        & \text{replication} \\
         & \mid & \0                                        & \text{finished process} \\
         & \mid & (\nu xy)P                                  & \text{channel creation}
    \end{array}\]
    \captionof{figure}{Process syntax}
    \label{fig:pi_calc_grammar}
\end{figure}

We assume base sets of \emph{variables} ($x, y, z, \ldots$) and \emph{values} ($v, v', \ldots$), which can be variables or the Boolean constants (true and false). There is also a set of \emph{labels} $\Labels$, ranged over by $l, l', \ldots$.
The syntax of processes ($P, Q, \ldots$) is given by the grammar in \cref{fig:pi_calc_grammar}.
We discuss the salient aspects of the syntax.
A process $ \overline{x}\langle y \rangle . P $ denotes the output of channel $y$ along channel $x$, which precedes the execution of $P$. 
Dually, a process $ x(y) . P $ denotes the input of a channel $v$ along channel $x$, which precedes the execution of process $P[v / y]$, i.e., the process $P$ in which all free occurrences of $y$ have been substituted by $v$.
Processes $x \rhd \{ l_i: P_i\}_{i \in I} $ and $ x \lhd l.P  $ implement a labelled choice mechanism.
Given a finite index set $I$, process $x \rhd \{ l_i: P_i\}_{i \in I}$, known as branching,
denotes an external choice: the reception of a label $l_j$ (with $j \in I$) along channel $x$ precedes the execution of the continuation $P_j$.
Process $x \lhd l.P  $, known as selection, denotes an internal choice; it is meant to interact with a complementary branching. 
Given processes $P$ and $Q$, process $P \mid Q$ denotes their parallel composition, which enables their simultaneous execution.
The process $!P$, the replication of $P$, denotes the composition of infinite copies of $P$ running in parallel, i.e., $P \mid P \mid \cdots$.
Process $\0$ denotes inaction.
Finally, process $(\nu xy) P$ is arguably the main difference with respect to usual presentations of the $\pi$-calculus, and denotes a restriction operator that declares $x$ and $y$ as \emph{covariables}, i.e., as complementary endpoints of the same channel, with scope $P$. 





\begin{figure}[t]
\scalebox{0.95}{
$
    \def\arraystretch{1.0}
    \begin{array}{lr}
        \textbf{Reduction} \\
        (\nu xy)(\overline{x}\langle v \rangle . P \mid y (z) . Q \mid R) 
        \longrightarrow
        (\nu xy)(P \mid Q[v / z] | R) & \textsc{[r-com]} \\
        (\nu xy)(x \lhd l_j . P \mid y \rhd \{ l_{i} : Q_i \}_{i \in I} \mid R) 
        \longrightarrow
        (\nu xy)(P \mid Q_j \mid R)  \qquad (j \in I)& \textsc{[r-sync]} \\
        
        \begin{array}{c}
            P \longrightarrow Q \\ \hline
            (\nu xy) P \longrightarrow (\nu xy) Q 
        \end{array} 
        \qquad
        \begin{array}{c}
            P \longrightarrow Q \\ \hline
            P \mid R \longrightarrow Q \mid R 
        \end{array}
        & \textsc{[r-res]} \textsc{[r-par]} \\
        \begin{array}{ccc}
            P \equiv P' & P \longrightarrow Q & Q \equiv Q'  \\ \hline
            \multicolumn{3}{c}{P' \longrightarrow Q'}
        \end{array} & \textsc{[r-cong]} \\
        \\
        \textbf{Structural congruence} \\
        \textit{Parallel composition:} \\
        \multicolumn{2}{l}{
        \quad P \mid Q \equiv Q \mid P 
        \qquad\quad (P \mid Q) \mid R \equiv P \mid (Q \mid R)
        \qquad\quad P \mid \0 \equiv P  
        \qquad\quad !P \equiv P \mid !P
        \qquad
        }
        \\
        \textit{Scope restriction:} \\
        \quad (\nu xy)(\nu vw) P \equiv (\nu vw)(\nu xy) P 
        \quad\quad (\nu xy) \0 \equiv \0 
        \quad\quad (\nu xy) P \equiv (\nu yx) P
        \\
        \quad (\nu xy) (P \mid Q) \equiv ((\nu xy) P) \mid Q \qquad \text{if $x$ and $y$ not free in $Q$}
    \end{array}
$}
\captionof{figure}{Reduction semantics}
\label{fig:process_semantics}
\end{figure}

The operational semantics for processes is defined as a \emph{reduction relation} denoted $\longrightarrow$, by relying on a notion of \emph{structural congruence} on processes, denoted $\equiv$.
\Cref{fig:process_semantics} defines these two notions. 
Intuitively, two processes are structurally congruent if they are identical in behaviour, but not necessarily in structure. It is the smallest congruence relation satisfying the axioms in \cref{fig:process_semantics} (bottom).
We say a process $P$ reduces to $Q$, written $P \longrightarrow Q$, when there is a single execution step yielding $Q$ from $P$. 
We comment on the rules in \cref{fig:process_semantics} (top).
\textsc{r-com} formalizes the exchange a value over a channel formed by two covariables.
Similarly, \textsc{r-sync} formalizes the synchronization between a branching and a selection that realizes the labelled choice. 
Rules \textsc{r-res} and \textsc{r-par} are contextual rules, which allow reduction to proceed under restriction and parallel composition.
Finally, Rule \textsc{r-cong} says that reduction is closed under structurally congruence: we can use $\equiv$ to promote interactions that match the structure of the rules above.

\subsection{Typing Rules}
Based on the above, variables $P, Q$ will refer to processes, $x, y, z$ will range over channels and $T, U, V$ are states of some fixed, but arbitrary, session coalgebra $(X, c)$. Variables are associated with these states in a \emph{context} $\Gamma$, as described by $\Gamma ::= \emptyset \mid \Gamma, x : T\;$. A context is an unordered, finite set of pairs, that may have at most one pair $(x, T)$ for each variable $x$. A context is thus isomorphic to a (partial) function from a finite set of variables to their types. We use $\Gamma$ to denote this isomorphic function as well: $\Gamma(x) = T$  if $(x, T) \in \Gamma$. The domain of a context is defined accordingly.

We know $par$ types are unrestricted, but they are not the only ones.

\begin{definition}
    A type is \emph{unrestricted}, written $\un(T)$, if its operation is $par$, $end$ or $\bsc$. A context is unrestricted, written $\un(\Gamma)$, if all types in $\Gamma$ are unrestricted, i.e., if $(x, T) \in \Gamma$ implies $\un(T)$. A type is \emph{linear}, written $\lin(T)$, if it is not unrestricted. A context is linear, if all its types are linear.
\end{definition}

\begin{figure}[t]
\small
\[\arraycolsep=7pt
    \begin{array}{c}
        \emptyset = \emptyset \circ \emptyset
        \qquad\qquad
        \begin{array}{cc}
            \Gamma = \Gamma_1 \circ \Gamma_2 & \un(T) \\ \hline
            \multicolumn{2}{c}{\Gamma, x : T = (\Gamma_1, x : T) \circ (\Gamma_2, x : T)}
        \end{array} 
        \\
        \begin{array}{c}
            \Gamma = \Gamma_1 \circ \Gamma_2 \\ \hline
            \Gamma, x : T = (\Gamma_1, x : T) \circ \Gamma_2
        \end{array} 
        \qquad\qquad
        \begin{array}{c}
            \Gamma = \Gamma_1 \circ \Gamma_2 \\ \hline
            \Gamma, x : T = \Gamma_1 \circ (\Gamma_2, x: T)
        \end{array} 
    \end{array}\qquad
\]
\caption{Context Split}
\label{fig:ctxsplit}
\end{figure}

A context $\Gamma$ may be split into two parts $\Gamma_1$ and $\Gamma_2$, such that the linear types are strictly divided between $\Gamma_1$ and $\Gamma_2$, but unrestricted types are copied. Context split is a trinary relation, defined by the axioms in Fig.~\ref{fig:ctxsplit}. We may write $\Gamma_1 \circ \Gamma_2$ to refer to a context $\Gamma$ for which $\Gamma = \Gamma_1 \circ \Gamma_2$ is in the context split relation. Such a context is not necessarily defined for any given contexts; we implicitly assume its existence when writing $\Gamma_1 \circ \Gamma_2$. Notice that the use of $\Gamma, x: T$ in the third rule of Fig.~\ref{fig:ctxsplit} carries the assumption that $x$ not in $\Gamma$. Otherwise, $\Gamma, x: T$ would have two pairs with $x$, which is not allowed.

\begin{figure}[t]
    \small
    \input{type_rules}
    \caption{Typing Rules}
    \label{fig:type_rules}
\end{figure}

The type system is defined by the rules given in Fig.~\ref{fig:type_rules}. A process $P$ is \emph{well-formed}, under a specific context $\Gamma$, if there is some inference tree whose root is $\Gamma \vdash P$ and whose nodes are all valid instantiations of these type rules. As \textsc{T-Inact} is the only rule that does not depend on the correctness of another process, it forms the leaves of such trees. The type system guarantees, for well-formed processes, that:
\begin{itemize}
    \item If the process terminates, then all linear sessions were completed.
    
    \item If a process reads a value from a channel, the value has the type specified by the channel's session type. If a process receives a label, it is one of the labels specified by the channel's session type.
\end{itemize}

We discuss the typing rules, which can be conveniently read keeping in mind the notations introduced in \cref{d:sesscoal} and \cref{d:auxfun}.
\textsc{T-Inact} ensures that all linear channels in the context are interacted with until the type becomes unrestricted. If our context contains a variable $x$ of type $?\Int$, then the process is required to read an $\Int$ from it. Thus, $x : \askC{\Int} \nvdash \0$.
Process $x(z).\0$, however, is well-formed for the same context.
\begin{align*}
    x :\ ?\Int          &\vdash\quad x(z).\0 & \textsc{T-In}  \\
    x : end, z: \Int    &\vdash\quad \0      & \textsc{T-Inact}
\end{align*}
\textsc{T-Par} causes unrestricted channels to be copied and linear channels to get split between composite processes, ensuring the latter occur in only a single process. Recall that replication $!P$ is an infinite composition of a single process $P$, hence, a replicated process can only use unrestricted channels. \textsc{T-Res} creates a channel by binding two covariables $x$ and $y$, of dual type, together.

Together, \textsc{T-Par} and \textsc{T-Rep} allow us to introduce new covariables, with new types, and distribute them. But, only unrestricted types may be copied. Notice that a process does not specify which types to give the newly bound variables.
\begin{align*}
    v: \Int     \quad & \vdash \quad (\nu xy) \: x(z).\0 \mid \overline{y}\langle v \rangle.\0  \\
    x: un?\Int  \quad & \vdash \quad\: x(z).\0 \mid x(z).\0   \\
    x: ?\Int    \quad & \nvdash\quad \: x(z).\0 \mid x(z).\0 
\end{align*}
Each action on a channel has its own rule: \textsc{T-In} handles input, binding the channel $x$ to the continuation type and $y$ to some supertype of the received type. \textsc{T-Out} handles output, which requires the sent variable to have a subtype of whatever the channel expects to send. \textsc{T-Branch} does external choice, where the process needs to offer at least all choices the type describes, coupled with processes that are correctly typed under the respective continuation types. \textsc{T-Sel} only has to check whether the single label that was chosen by the process was a valid option, and if the rest of the process is correct under the continuation type.

These rules are only specified for linear states; \textsc{T-Unpack} allows a $par$ state to be used as if it was the underlying type, as long as it is parallelizable (\cref{d:par}).

We can actually create structures with $par$ that do not have a syntactical equivalent. 
For example, let $T_{end}$ be a state with 
$\sigma(T_{end}) = par$ and $\delta(T_{end})(*) = T_{end}$.
Just like regular $end$, $T_{end}$ allows no interactions on the channel, but it does not cause a $par$ type to be unparallelizable.

\begin{figure}[ht]
  \centering
  \trimbox{0cm 0.25cm 0cm 0.4cm}{%
    \begin{tikzpicture}[session coalg,node distance=2.5cm]

      \tikzstyle{every state}=[fill=none,draw=none,text=black]

      \node[par state]   (0)                {$T$};
      \node[ask channel] (1) [right of=0]   {$q_1$};
      \node[par state]   (2) [right of=1]   {$q_2$};
      \node[basic int]   (int) [below of=2, node distance=1.3cm]                     {$q$};
      
      \path
        (0) edge node {} (1)
        (2) edge[loop right] node {} (2) 
        ;

      \begin{scope}[every node/.style={inner sep=0em}]
        \path
        (1) edge[data,shorten >=1.5em]          node[name=1-data]               {} (int)
        (1) edge                                node[name=1-cont,pos=0.7]       {} (2)
        ;
      \end{scope}

      \path[->, dashed, shorten >=1pt,cont]
      (1-data) edge[bend right=25] (1-cont);
    \end{tikzpicture}
  }
  \caption{Session coalgebra using an alternative completed protocol}
  \label{fig:alt-end-example}
\end{figure}
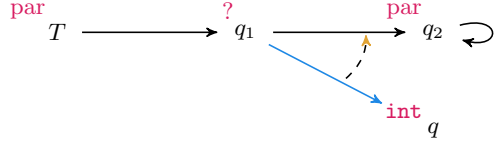
\noindent
The diagram in \cref{fig:alt-end-example} describes a parallelizable unrestricted state $T$ such that each copy of a channel in state $T$ can only do a single receive. However, because it is unrestricted, we can still copy the channel across threads and read a value per copy. We can even read infinitely many values through replication.
\begin{align*}
    x : T \quad & \nvdash \quad x(y_1).x(y_2).x(y_3).\0 \\
    x : T \quad & \vdash \quad x(y_1).\0 \;\mid\; x(y_2).\0 \;\mid\; x(y_3).\0 \\
    x : T \quad & \vdash \quad !(x(y).\0) 
\end{align*}
Such a type might be interesting in combination with session delegation. A linear session could be established by receiving a channel from a unrestricted channel. By using a structure like $T$, each thread is guaranteed to establish at most one private session, but there can be many of such sessions in parallel threads.

In \cref{sec:relations}, we defined simulation through the intuition of subtyping as substitutability in one direction. We see that substitution is indeed allowed for simulated types.
\begin{theorem}\label{thm:simulation_sub_typing}
The following, more common, rule is admissible from the rules in  \cref{fig:type_rules}.
\[
    \arraycolsep=10pt\def\arraystretch{1.3}
    \begin{array}{cc}
        \Gamma, x : T \vdash P & U \sqsubseteq T  \\ \hline
        \multicolumn{2}{c}{\Gamma, x : U \vdash P}
    \end{array}
\]
\end{theorem}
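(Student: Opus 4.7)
The plan is to prove admissibility by structural induction on the derivation of $\Gamma, x : T \vdash P$, in each case destructing the hypothesis $U \sqsubseteq T$ via the definition of $\sqsubseteq$ as the greatest post-fixpoint of $h_\sqsubseteq$ and rebuilding a derivation with $U$ in place of $T$. Before the induction, I would establish two auxiliary lemmas. First, simulation preserves unrestrictedness: if $U \sqsubseteq T$ and $\un(T)$, then $\un(U)$; this is immediate by inspection of the \emph{par}, \emph{end}, and \emph{bsc} clauses of $h_\sqsubseteq$, each of which forces $c(U)$ to have the same operation as $c(T)$. Second, simulation preserves parallelizability: if $U \sqsubseteq T$ and $\PAR(T)$, then $\PAR(U)$. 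Since $T$ has operation $\PAR$, its continuation closure splits as $\contClosure{T}_c = \{T\} \cup \contClosure{\tr(T)(*)}_c$, which is parallelizable iff $\PAR(\tr(T)(*))$; the explicit biconditional in the par clause of $h_\sqsubseteq$ then transports parallelizability through to $\PAR(\tr(U)(*))$, and hence to $\PAR(U)$. I will also use transitivity of $\sqsubseteq$ from \cref{prop:properties_of_bisim} throughout.

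The routine cases of the induction concern premises whose distinguished variable is not $x$: after $\alpha$-converting bound covariables in \textsc{T-Res}, \textsc{T-In}, \textsc{T-Out} to be disjoint from $x$, one simply applies the inductive hypothesis to the premises mentioning $x$ and reapplies the rule. \textsc{T-Inact} and \textsc{T-Rep} additionally invoke the unrestrictedness lemma. \textsc{T-Par} requires a small case split on the context split $\Gamma, x : T = \Gamma_1 \circ \Gamma_2$: when $x$ appears in only one branch, apply the inductive hypothesis there; when it appears in both, $T$ must be unrestricted, hence so is $U$ by the unrestrictedness lemma, and $\Gamma, x : U$ admits an analogous split with the inductive hypothesis applied on both sides.

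The genuinely new cases are those where $x$ is the subject of a communication rule. For \textsc{T-In} with $c(T) = (?, f)$, the hypothesis $U \sqsubseteq T$ unfolds to $c(U) = (?, f_U)$ with $f_U(*) \sqsubseteq f(*)$ and $f_U(1) \sqsubseteq f(1)$; applying the inductive hypothesis to the continuation premise retypes the channel as $f_U(*)$, and transitivity through $f(1) \sqsubseteq V$ re-establishes the side condition. \textsc{T-Out} is symmetric but contravariant on the data position, using $f(1) \sqsubseteq f_U(1)$ combined with $V \sqsubseteq f(1)$. \textsc{T-Branch} and \textsc{T-Sel} follow the same pattern, invoking the label-set inclusions built into $h_\sqsubseteq$ (covariant for external, contravariant for internal choice) together with pointwise simulation of the corresponding continuations. \textsc{T-Unpack} is discharged by the parallelizability lemma and the inductive hypothesis applied to $\tr(T)(*)$.

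The main obstacle I expect is the \textsc{T-Unpack} case: its correctness hinges on the par clause of $h_\sqsubseteq$ having been defined precisely strongly enough to transport parallelizability, so most of the real work lies in proving the auxiliary parallelizability lemma cleanly. The only other subtle point is \textsc{T-Par}, whose context-split gymnastics are discharged entirely by the unrestrictedness lemma; after these two lemmas are in hand, everything else reduces to bookkeeping and applications of transitivity of $\sqsubseteq$.
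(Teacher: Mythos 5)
Your proposal is correct and follows essentially the same route as the paper: an induction on the typing derivation supported by exactly the two auxiliary lemmas the paper proves (simulation preserves unrestrictedness, and simulation of a parallelizable $\PAR$ state preserves parallelizability via the explicit biconditional in the par clause of $h_\sqsubseteq$), with the same transitivity arguments in the \textsc{T-In}/\textsc{T-Out}/\textsc{T-Branch}/\textsc{T-Sel} cases. The only cosmetic difference is that the paper strengthens the induction statement to a pointwise ``context simulation'' ($\Delta(x) \sqsubseteq \Gamma(x)$ for every variable in the domain) rather than tracking a single distinguished variable, but both formulations carry the same case analysis through.
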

That is, we could add the rule as an axiom, without changing the set of typable processes. As a corollary, bisimulation of states implies the states are equivalent with respect to the type system.

\begin{corollary}\label{cor:bisim_type_equivalence}
    For all bisimilar types $T \sim U$, contexts $\Gamma$ and processes $P$, it holds that $\Gamma, x : T \vdash P$ if and only if $\Gamma, x : U \vdash P$.
\end{corollary}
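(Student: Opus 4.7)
The plan is to derive this corollary directly from \cref{thm:simulation_sub_typing} together with the characterisation of bisimilarity given in \cref{prop:bisim_iff_double_sim}(1). The key observation is that bisimilarity on session coalgebras factors as mutual similarity: $T \sim U$ holds precisely when both $T \sqsubseteq U$ and $U \sqsubseteq T$. Since \cref{thm:simulation_sub_typing} already admits a subtyping substitution rule, the bi-directional substitution required by the corollary will fall out essentially for free.

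Concretely, I would first assume $T \sim U$ and invoke \cref{prop:bisim_iff_double_sim}(1) to extract both $T \sqsubseteq U$ and $U \sqsubseteq T$. For the forward direction, suppose $\Gamma, x : T \vdash P$. Then I would apply \cref{thm:simulation_sub_typing} with the context $\Gamma$, the variable $x$, the type $T$, the subtype $U$ (using $U \sqsubseteq T$), and the process $P$, obtaining $\Gamma, x : U \vdash P$. The reverse direction is entirely symmetric: starting from $\Gamma, x : U \vdash P$ and using the other inequality $T \sqsubseteq U$, a second application of \cref{thm:simulation_sub_typing} yields $\Gamma, x : T \vdash P$.

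There is no real obstacle here, since the argument is a two-line unpacking of two earlier results. The only subtlety worth noting in the write-up is that both directions depend on \cref{thm:simulation_sub_typing} being stated for an arbitrary ambient context $\Gamma$ and an arbitrary variable $x$, which it is; hence no issue with weakening or renaming arises. If desired, one could also phrase the corollary functorially by observing that the map $P \mapsto$ ``is typable in $\Gamma, x : T \vdash P$'' is monotone along $\sqsubseteq$ in the type position, and hence constant on $\sim$-equivalence classes.
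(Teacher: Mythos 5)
Your proposal is correct and is exactly the derivation the paper intends: the corollary follows by combining \cref{prop:bisim_iff_double_sim}(1) (bisimilarity is mutual similarity) with two applications of the admissible subtyping rule of \cref{thm:simulation_sub_typing}, one in each direction. The directions of $\sqsubseteq$ are applied correctly, so nothing further is needed.
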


\section{Algorithmic Type Checking}\label{sec:algorithmic_type_checking}
The type rules describe what well-formed processes look like, but do not directly allow us to decide whether an arbitrary process is well-formed or not. This is because, beforehand, we do not know:
\begin{enumerate}
    \item Which type to introduce in reading (\textsc{T-In}) or scope restriction (\textsc{T-Res}), or
    \item How to split the context in composite proccesses (\textsc{T-Par})
\end{enumerate}

Rather than trying to infer the introduced types, we augment the language of processes with type annotations. 
\[
    P ::= \ldots \mid (\nu xy : T)\, P \mid x (y : T) . P 
\]
We only need to annotate one type for scope restrictions, as we can create the other with the duality function. Productions beside input and scope restrictions are unchanged.

When checking parallel processes, we pass along the entire context to the first process, keeping track of all linear variables used, and remove those from the context given to the second process. To do this we add an output to the algorithm; in an execution $\algo{\Gamma_1}{P}{\Gamma_2}$, output $\Gamma_2$ is the subset of $\Gamma_1$ containing only those variables of the input which had unrestricted types or were not used in $P$. We say subset because we want these variables, if present, to have the same type in $\Gamma_2$ as in $\Gamma_1$.

\begin{figure}[ht]
    \[\arraycolsep=5pt
        \Gamma \div \emptyset = \Gamma
        \qquad
        \begin{array}{cc}
            \Gamma_1 \div \LinVar = \Gamma_2, x : T & \un(T)  \\ \hline
            \multicolumn{2}{c}{\Gamma_1 \div (\LinVar, x) = \Gamma_2}
        \end{array}
        \qquad
        \begin{array}{cc}
            \Gamma_1 \div \LinVar = \Gamma_2 & x \not\in dom(\Gamma_2)  \\ \hline
            \multicolumn{2}{c}{\Gamma_1 \div (\LinVar, x) = \Gamma_2}
        \end{array}
    \]
    \caption{Context Difference}
    \label{fig:context_difference}
\end{figure}
\begin{figure}[ht]
    \input{algorithmic_rules}
    \caption{Algorithmic Type Checking Rules}
    \label{fig:type_checking}
\end{figure}
\Cref{fig:type_checking} lists the algorithmic versions of the type rules. \textsc{A-Par}, for example, checks parallel processes as described. By construction, $\Gamma_2$ is one part of the context split required to instantiate \textsc{T-Par}. The linear variables of the other part is exactly those which are present in $\Gamma_1$ but not in $\Gamma_2$. 

This change in \textsc{A-Par} requires amending the other rules. Firstly, we need the algorithm to accept even when we do not fully complete all sessions of $\Gamma_1$ in $P$. We do this by unconditionally accepting the terminated process. Note that acceptance of the algorithm now only implies well-formedness if the returned context is unrestricted.

Secondly, the algorithm needs to remove linear variables from the output as we use them. We do not, however, want to remove any variable that has a linear type, as that would allow us to accept process which do not complete all linear sessions. Thus, we introduce the context difference operator $\div$ in Fig.~\ref{fig:context_difference}. $\Gamma \div x$ is the context of all variable/type pairs in $\Gamma$ minus a potential pair including $x$, but is only defined if $(x, T) \in \Gamma$ implies that $T$ is unrestricted.

We elaborate on \textsc{A-Branch}; the algorithm is called once for every branch, yielding a context $\Gamma_l$ each time. Excluding $x$, each branch must use the exact same set of linear variables. Thus, we require that all these contexts are equal up to a potential $(x, U_l)$ pair. By that assumption, $\Gamma_l \div x$ is uniquely defined without specifying $l$.

To motivate this, consider a type $T = \&\{ a: T_{un} ,\, b: \End \}$, where $T_{un}$ is some unrestricted type distinct from $\End$, and some process $P = \& \{ a: \0 ,\, b: \0 \}$. Let $\Gamma$ be some unrestricted context, $\0$ is well-formed for both $\Gamma, x: T_{un}$ and $\Gamma, x: \End$; the algorithm agrees.
\begin{align*}
    & \algo{\Gamma, x : T_{un}}{\0}{(\Gamma, x : T_{un})} \\
    & \algo{\Gamma, x : \End}{\0}{(\Gamma, x : \End)}
\end{align*}
The resulting contexts are not equal. $P$ is well-formed for $\Gamma$, so we have to allow $x$ to have different types in the output of different branches in a complete algorithm.
\textsc{A-In}, \textsc{A-Out} and \textsc{A-Sel} do not have multiple branches to check, but the ideas are similar. When introducing a new variable, either through a read or scope restriction, the new variable is also removed from the output. \textsc{A-Unpack} only unpacks unrestricted types. We want those to have the same type in the input as in the output, so we remove the variable and add a pair with the original type.

Take, for example, the process
\[ x:\ ?\Int,\; y:\ ?\Int  \quad\vdash\quad x(z_1).\0  \mid y(z_2).\0 \]
The variables are split correctly, and both split contexts are unrestricted when the process is completed, thus its well-formed. 
\comment {An algorithmic derivation is

\[
    \begin{array}{rclll}
        x:\ ?\Int,\; y:\ ?\Int          &\vdash& x(z_1).\0  \mid y(z_2).\0 & : (y: \End) &; \{y\} \\
        \\
        x:\ ?\Int,\; y:\ ?\Int          &\vdash& x(z_1).\0              & : (x: \End, y:\ ?\Int) &; \{x\}    \\
        \Gamma_1 = x: \End,\; y:\ ?\Int,\; z_1 : \Int   &\vdash& \0          & : (x: \End,\; y:\ ?\Int,\; z_1 : \Int) &; \emptyset  \\
        \\
        y:\ ?\Int          &\vdash& y(z_2).\0               & : (y: \End) &; \{y\}    \\
        y:\ \End,\; z_2 : \Int   &\vdash& \0         & : (y:\ \End,\; z_2 : \Int) &; \emptyset  \\
    \end{array}
\]
Because the entire context was passed to the first process, $\Gamma_1$ is not unrestricted. Naively copying the unrestricted constraint from \textsc{T-Inact} to \textsc{A-Inact} would have caused the algorithm to wrongly reject this process. }

If, on the other hand, the left process did not complete the linear session, then the context difference would not have been defined. Take one such process:
\[ 
    x:\ ?\Int.?\Int,\; y:\ ?\Int  \quad\nvdash\quad x(z_1).\0  \mid y(z_2).\0
\]
We succeed in checking the terminated process of the left part.
\[ 
    \algo{x:\ ?\Int,\; y:\ ?\Int \quad}{\quad \0}{\quad (x:\ ?\Int, y:\ ?\Int)}
\]
But $x$ has a linear type in the output. $(x:\ ?\Int, y:\ ?\Int) \div \{x\}$ is undefined, so the algorithm rejects this input entirely. The process was indeed not well-formed, and no further parallel processes could fix it; the rejection is expected.

For each process and context there is at most one applicable algorithmic rule: which one is directed by the process syntax and unrestrictedness of a channel being interacted with. 

Under the same assumptions as before, that the session coalgebra describing the types is finitely generated, this induced type checking algorithm is decidable, sound, and complete with respect to the type rules defined in \cref{sec:typerules}. 

\begin{theorem}[Decidability]\label{thm:algo_decidability}
    The type checking algorithm terminates in finite time for every input, assuming a finitely generated session coalgebra.
\end{theorem}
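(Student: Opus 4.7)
Since the algorithm of \cref{fig:type_checking} is syntax-directed — the shape of $P$ together with the operation $\op(T)$ of the type of the channel being acted on uniquely selects the applicable rule — termination reduces to two things: decidability of every side condition in finite time, and well-foundedness of the recursive calls. The plan is to dispatch the side conditions using \cref{thm:bisimulation_decidability}, and then to design a lexicographic measure on judgements $\algo{\Gamma}{P}{\cdot}$ that strictly decreases at every recursive call.

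The non-trivial side conditions are subtyping $U \sqsubseteq f(1)$ in \textsc{A-In} and \textsc{A-Out}, the computation of $\overline{T}$ in \textsc{A-Res} (duality), parallelizability $\PAR(T)$ in \textsc{A-Unpack}, and the context equality $\Gamma_1 = \Gamma_2$ in \textsc{A-Rep}. All of these are decidable by \cref{thm:bisimulation_decidability}: finite generation of $c$ makes $\subcoalg{x}_c$ and $\contClosure{x}_c$ finite for every state $x$ appearing in any context built during execution, and hence bisimilarity, similarity, duality and parallelizability are decidable in finite time. Context split and context difference are purely syntactic manipulations of finite functions and are trivially decidable.

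For well-foundedness of the recursion, I would use the lexicographic measure $\mu(P, \Gamma) = (|P|, M(\Gamma))$. Every rule except \textsc{A-Unpack} recurses on proper subterms of $P$, so $|P|$ strictly decreases there. The delicate case is \textsc{A-Unpack}, where $P$ is unchanged while $x : T$ is replaced by $x : \tr(T)(*)$, and so the second component must absorb the step. Because $c$ is finitely generated, the par-successor chain $T, \tr(T)(*), \tr(\tr(T)(*))(*), \ldots$ lives inside the finite set $\contClosure{T}_c$; I would take $M(\Gamma)$ to count the par-states in $\bigcup_{x:T \in \Gamma} \contClosure{T}_c$ that have not yet been visited as par-continuations for their associated variable.

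The main obstacle is that nothing in the rules syntactically prevents an infinite \textsc{A-Unpack}-chain that cycles within the par-states of $\contClosure{T}_c$, which would make $M(\Gamma)$ alone insufficient without cycle detection. The cleanest resolution is to exploit finiteness of $\contClosure{T}_c$: the algorithm memoises the finite set of par-states already unpacked for each variable, and upon revisiting one it commits deterministically (either matching the next firing rule or rejecting). This guarantees that between two non-\textsc{A-Unpack} applications only finitely many \textsc{A-Unpack} steps occur, so $\mu$ strictly decreases under the lexicographic order at every step. Induction on $\mu$, combined with decidability of all side conditions, then establishes the theorem.
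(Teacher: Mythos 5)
Your proposal is correct and follows essentially the same route as the paper's proof: finitely many non-\textsc{A-Unpack} steps because each such rule strictly shrinks the process, side conditions dispatched via \cref{thm:bisimulation_decidability}, and the potential infinite \textsc{A-Unpack} chains tamed by finite generation together with detection of pure-$\PAR$ cycles (your memoisation is exactly the paper's ``finite cycle of purely $par$ states, so reject'' argument). The lexicographic-measure packaging is a more formal presentation of the same counting argument, not a different method.
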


To define algorithmic typechecking, we included type annotations in input and restriction operators. To go back to the language that we used to define our typing rules, we can erase those annotations.
Let $erase(\cdot)$ denote a function on processes defined as 
\begin{align*}
    erase((\nu xy : T).Q) &= (\nu xy). erase(Q) \\
    erase(x(y : {T}) . Q) &= x(y) . erase(Q)
\end{align*}
and as an homomorphism on the remaining process constructs.

\begin{theorem}[Correctness]\label{thm:algo_correctness}
    For any context $\Gamma$ and annotated process $P$, $\Gamma_1 \vdash erase(P)$ iff $\Gamma_1 \vdash P ; \Gamma_2$ and $\un(\Gamma_2)$
\end{theorem}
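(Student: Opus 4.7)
The plan is to prove the biconditional by induction, establishing each direction separately with strengthened hypotheses. I first prove two weakening lemmas that I will use throughout. The \emph{declarative weakening lemma} states that if $\Gamma \vdash Q$ and $\Delta$ is unrestricted with domain disjoint from $\Gamma$, then $\Gamma, \Delta \vdash Q$; this follows by straightforward induction on the derivation, since T-Inact continues to apply when the extra bindings are unrestricted and every other rule simply threads $\Delta$ through. The \emph{algorithmic weakening lemma} states that if $\Gamma_1 \vdash P ; \Gamma_2$ and $\Delta$ has domain disjoint from $\Gamma_1$, then $\Gamma_1, \Delta \vdash P ; \Gamma_2, \Delta$; this is proved by induction on the algorithmic derivation, observing that $\Delta$ is threaded unchanged through every rule and that $\div$ commutes with context extension on disjoint domains (the A-Unpack case, which swaps a $\PAR$ type for its continuation on the premise, still goes through because the swap is local to the channel variable).

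For \textbf{soundness}, I strengthen the induction statement to: whenever $\Gamma_1 \vdash P ; \Gamma_2$, there exists $\Gamma_P$ such that $\Gamma_1 = \Gamma_P \circ \Gamma_2$ in the context-split relation of \cref{fig:ctxsplit} and $\Gamma_P \vdash \mathit{erase}(P)$. The theorem then follows: if $\un(\Gamma_2)$, the declarative weakening lemma applied to $\Gamma_P \vdash \mathit{erase}(P)$ with the linear-free portion of $\Gamma_2$ yields $\Gamma_1 \vdash \mathit{erase}(P)$. The strengthened claim is proved by induction on the algorithmic derivation. For A-Inact and A-Rep, the choice $\Gamma_P = \Gamma_1$ suffices. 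For A-In, A-Out, A-Branch, A-Sel and A-Unpack, I apply the IH to the premise, reconstruct $\Gamma_P$ by re-attaching the channel variable with its original type, and invoke the corresponding declarative rule; A-Res additionally uses $T \mathrel{\bot} \overline{T}$ from \cref{prop:duality_fun}. The crucial A-Par case uses two IH applications: $\Gamma_1 = \Gamma_P \circ \Gamma_2$ with $\Gamma_P \vdash \mathit{erase}(P)$, and $\Gamma_2 = \Gamma_Q \circ \Gamma_3$ with $\Gamma_Q \vdash \mathit{erase}(Q)$; the associativity-style properties of context split give $\Gamma_1 = (\Gamma_P \circ \Gamma_Q) \circ \Gamma_3$, and T-Par yields $\Gamma_P \circ \Gamma_Q \vdash \mathit{erase}(P \mid Q)$, which provides the required witness.

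For \textbf{completeness}, I induct on the derivation of $\Gamma_1 \vdash \mathit{erase}(P)$, producing $\Gamma_2$ with $\Gamma_1 \vdash P ; \Gamma_2$ and $\un(\Gamma_2)$. Each declarative rule has a matching algorithmic rule applicable to the annotated $P$; the interesting case is again T-Par. Given $\Gamma_1 = \Gamma_A \circ \Gamma_B$ with $\Gamma_A \vdash \mathit{erase}(P)$ and $\Gamma_B \vdash \mathit{erase}(Q)$, the IH yields $\Gamma_A \vdash P ; \Gamma_A'$ and $\Gamma_B \vdash Q ; \Gamma_B'$ with both outputs unrestricted. Applying algorithmic weakening to the first with $\Delta$ equal to the strictly linear part of $\Gamma_B$ gives $\Gamma_1 \vdash P ; \Gamma_B$, using that $\Gamma_A'$ is exactly the unrestricted portion of $\Gamma_A$, which coincides with the unrestricted portion of $\Gamma_B$ (since unrestricted variables in the split $\Gamma_A \circ \Gamma_B$ appear in both). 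Then A-Par composes with the IH-derived algorithmic derivation for $Q$ to give $\Gamma_1 \vdash P \mid Q ; \Gamma_B'$ with $\un(\Gamma_B')$. The hardest part will be the algorithmic weakening lemma itself, especially the A-Unpack and A-Branch cases, along with keeping the bookkeeping straight between the algorithm's $\div$-based removal of linear variables and the declarative split's copy-or-divide treatment; once those are verified, each direction of the theorem reduces to a rule-by-rule case analysis.
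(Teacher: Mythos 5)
Your overall architecture is viable and genuinely reorganises the paper's argument. For soundness the paper keeps the unstrengthened statement and instead proves an \emph{algorithmic linear strengthening} lemma (linear variables that survive into $\Gamma_2$ can be deleted from an algorithmic derivation) together with a \emph{monotonicity} lemma ($\Gamma_2 \subseteq \Gamma_1$ and $\Un(\Gamma_2) = \Un(\Gamma_1)$); its A-Par case strengthens the first premise by removing $\Lin(\Gamma_2)$ before recursing. You instead strengthen the induction hypothesis to produce a witness $\Gamma_P$ with $\Gamma_1 = \Gamma_P \circ \Gamma_2$ and $\Gamma_P \vdash \mathit{erase}(P)$, which turns the A-Par case into an associativity argument for context split and trades the strengthening lemma for a much easier declarative weakening lemma. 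Your completeness direction coincides with the paper's: algorithmic weakening, plus the observation that the residual context of a run with unrestricted output is exactly the unrestricted part of its input.

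Two points need repair, however. First, your base case is wrong as stated: A-Inact reads $\Gamma \vdash \0 \mathrel{;} \Gamma$ for \emph{arbitrary} $\Gamma$, so taking $\Gamma_P = \Gamma_1$ requires both $\Gamma_1 = \Gamma_1 \circ \Gamma_1$ and $\Gamma_1 \vdash \0$, and both fail whenever $\Gamma_1$ contains a linear binding --- tolerating such contexts in A-Inact is the whole point of the output-context design. The correct witness is $\Gamma_P = \Un(\Gamma_1)$, the unrestricted part of $\Gamma_1$; the split $\Gamma_1 = \Un(\Gamma_1) \circ \Gamma_1$ is derivable from \cref{fig:ctxsplit} and T-Inact then applies. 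The analogous correction is needed for A-Rep, where the witness supplied by the IH is in fact forced to be unrestricted by the equation $\Gamma_1 = \Gamma_P \circ \Gamma_1$. Second, your completeness A-Par case silently uses that $\Gamma_A'$ is \emph{exactly} $\Un(\Gamma_A)$; the facts $\Gamma_A' \subseteq \Gamma_A$ and $\Un(\Gamma_A') = \Un(\Gamma_A)$ do not follow from $\un(\Gamma_A')$ alone but require the monotonicity lemma, which you should state and prove (the paper does). Finally, T-Res in the completeness direction is not merely ``a matching rule'': the declarative rule admits any $U$ with $T \mathrel{\bot} U$ while the algorithm fixes $y : \overline{T}$, so you need \cref{prop:duality_fun}, \cref{prop:properties_of_bisim} and \cref{cor:bisim_type_equivalence} to replace $U$ by $\overline{T}$ in the premise before invoking the induction hypothesis.
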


\section{Concluding Remarks}
\label{sec:conclusion}
We have developed a new, language-independent foundation for session types by relying on coalgebras. 
We introduced session coalgebras, which elegantly capture all communication structures of session types, both linear and unrestricted, without committing to a specific syntactic formulation for processes and types. 
Session coalgebras allow us to rediscover language-independent coinductive definitions for duality, subtyping, and type equivalence.
A key idea is to assimilate channel types to the states of a session coalgebra; we demonstrated this insight by deriving a session type system for the $\pi$-calculus, which revisits and extends that by Vasconcelos~\cite{vasc}, unlocking decidability results and algorithmic type checking. 

Interesting strands for future work include extending our coalgebraic toolbox so as to give a language-independent justification to advanced session type systems, such as context-free session types~\cite{DBLP:conf/icfp/ThiemannV16} and multiparty session types~\cite{DBLP:conf/popl/HondaYC08}. 
Another line concerns extending our coalgebraic view to include \emph{language-dependent} issues and properties that require a global analysis on session behaviors. A salient example are \emph{liveness} properties such as (dead)lock-freedom and progress: advanced type systems~\cite{DBLP:conf/concur/Kobayashi06,DBLP:conf/coordination/PadovaniVV14,DBLP:conf/csl/Padovani14,DBLP:journals/mscs/CoppoDYP16} typically couple (session) types with advanced mechanisms (such as priority-based annotations and strict partial orders), which provide a global insight to rule out the circular dependencies between sessions that are at the heart of stuck processes.
Lastly, we have not made use of final coalgebras and modal logic, two concepts that play a major role in the
study of coalgebras and would allow us to analyse the behaviour of session coalgebras.

\bibliographystyle{splncs04}
\bibliography{bibliography}

\newpage
\appendix
\section*{Appendix}
This appendix contains the proofs of statements made in the paper.

\subsection*{Bisimulation as Bi-directional Simulation}
Here we will proof Proposition~\ref{prop:bisim_iff_double_sim}. Firstly, we claimed a subtype could be used wherever the supertype was expected. In the case of $par$ this must mean that the subtype of any parallelizable type must be parallelizable.
\begin{lemma}
    Any state $x$ for which there exists a parallelizable $y$ with $\op(y) = par$ and $x \sqsubseteq y$, is parallelizable.
\end{lemma}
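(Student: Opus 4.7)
The plan is to unfold both the definition of the simulation map $h_\sqsubseteq$ on $\PAR$ states and the definition of $\PAR(\cdot)$ on states, and show that the pairwise condition required for $\contClosure{x}_c$ to be parallelizable is inherited from the one known for $\contClosure{y}_c$ along the simulation.

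First, I would extract the consequences of $x \sqsubseteq y$ when $\op(y) = \PAR$. Inspecting the definition of $h_\sqsubseteq$, the only clause that pairs a $\PAR$-state with anything is the $\PAR$-$\PAR$ clause; hence $\op(x) = \PAR$, together with $\delta(x)(*) \sqsubseteq \delta(y)(*)$ and, crucially, the biconditional $\PAR(\delta(x)(*)) \Leftrightarrow \PAR(\delta(y)(*))$ guaranteed by that clause.

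Next, I would establish a simple heredity observation: for every $z \in \contClosure{y}_c$, one has $\contClosure{z}_c \subseteq \contClosure{y}_c$, because $\contClosure{y}_c$ already contains $z$ and is closed under continuation transitions, while $\contClosure{z}_c$ is by definition the least such set. Since the pairwise condition defining parallelizability is trivially preserved when restricting to a subset, $y$ being parallelizable entails that every $z \in \contClosure{y}_c$ is parallelizable. Applied to $z = \delta(y)(*)$, this gives $\PAR(\delta(y)(*))$, and the biconditional from the previous paragraph transports this to $\PAR(\delta(x)(*))$.

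It then remains to conclude parallelizability of $x$ itself. Since $x$ is a $\PAR$ state, it has the unique continuation $\delta(x)(*)$, so $\contClosure{x}_c = \{x\} \cup \contClosure{\delta(x)(*)}_c$. For any pair $(u, v)$ drawn from $\contClosure{x}_c$, either one of them equals $x$, in which case $\sigma(x) = \PAR$ already discharges the condition, or both lie in $\contClosure{\delta(x)(*)}_c$, in which case $\PAR(\delta(x)(*))$ does. The main thing to recognise is that the clause $\PAR(f(*)) \Leftrightarrow \PAR(g(*))$ in the $\PAR$-case of $h_\sqsubseteq$ is exactly the feature that allows parallelizability to be handed off along simulations; without it the argument could not close. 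Once this is observed, the remaining steps are essentially bookkeeping around the decomposition of $\contClosure{x}_c$ and the subset-monotonicity of the pairwise condition.
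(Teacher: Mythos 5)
Your proof is correct and follows essentially the same route as the paper's: read off from the $\PAR$--$\PAR$ clause of $h_\sqsubseteq$ that $x$ is a $\PAR$ state and that $\PAR(\delta(x)(*))$ iff $\PAR(\delta(y)(*))$, obtain $\PAR(\delta(y)(*))$ from heredity of parallelizability along continuations, and conclude via the decomposition $\contClosure{x}_c = \{x\} \cup \contClosure{\delta(x)(*)}_c$, where pairs involving the $\PAR$ state $x$ are discharged directly. If anything, your explicit heredity step ($\contClosure{z}_c \subseteq \contClosure{y}_c$ for $z \in \contClosure{y}_c$) is stated more carefully than in the paper's version of the argument.
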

\begin{proof}
    By definition of simulation, $x$ must also be a $par$ state. If $y$ is parallelizable, then $\delta(y)(*)$ is parallelizable, so $\delta(x)(*)$ is also parallelizable. Let $Y$ be the smallest set containing $y$ and closed under continuations. The smallest continually closed set containing $x$ is clearly $X = \{x\} \cup Y$. The latter was shown to be parallelizable, and any pair in $X \times X$ but not in $Y \times Y$ contains $x$, for which $\sigma(x) = par$. Consequently, $x$ is parallelizable.
\end{proof}

This is a direct consequence of the definition of simulation. In bisimulation we did not mention parallelizability, but a similar property still holds.

\begin{lemma}\label{lem:sim_par_implication}
    If $x \sim y$ and $\op(x) = par$, then $x$ is parallelizable if and only if $y$ is parallelizable.
\end{lemma}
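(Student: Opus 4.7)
The plan is to exploit that bisimilarity $\sim$ preserves both the operation tag and the pointwise continuation structure, so the continuation closures $\contClosure{x}_c$ and $\contClosure{y}_c$ mirror each other up to $\sim$, and the parallelizability condition transfers from one side to the other.

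First, I would observe that since $\sim$ is a post-fixpoint of $c^* \circ f_\sim$, the definition of $f_\sim$ forces bisimilar states to share the same $a$-component of their coalgebra image. In particular, $x \sim y$ with $\op(x) = \PAR$ immediately gives $\op(y) = \PAR$, and whenever $u \sim v$ we have $\sigma(u) = \sigma(v)$.

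Second, I would show by induction on the inductive construction of $\contClosure{x}_c$ that for every $u \in \contClosure{x}_c$ there exists some $v \in \contClosure{y}_c$ with $u \sim v$ (and by symmetry of $\sim$, vice versa). The base case is the assumption $x \sim y$ with $y \in \contClosure{y}_c$. For the inductive step, given $u \in \contClosure{x}_c$ matched by some $v \in \contClosure{y}_c$ with $u \sim v$, any continuation transition $\delta(u)(\alpha)$ with $\alpha \in \dom_C(\delta(u))$ is, by the universal quantifier in $f_\sim$, bisimilar to $\delta(v)(\alpha)$, and the latter lies in $\contClosure{y}_c$ by its closure under continuation transitions.

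Third, assuming $x$ is parallelizable, take arbitrary $v_1, v_2 \in \contClosure{y}_c$ and pick $u_1, u_2 \in \contClosure{x}_c$ with $u_i \sim v_i$ via the previous step. Parallelizability of $x$ supplies one of the three cases $u_1 \sim u_2$, $\sigma(u_1) = \PAR$, or $\sigma(u_2) = \PAR$. In the first case, transitivity and symmetry of $\sim$ (an equivalence relation by Proposition~\ref{prop:properties_of_bisim}) give $v_1 \sim v_2$; in the remaining cases, the fact that $u_i \sim v_i$ forces $\sigma(u_i) = \sigma(v_i)$ yields $\sigma(v_i) = \PAR$ for the corresponding $i$. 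Either way, the disjunction required by Definition~\ref{d:par} holds for $(v_1, v_2)$, so $y$ is parallelizable. The converse follows by the symmetry of $\sim$.

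The main obstacle is keeping the closure construction and the bisimilarity correspondence aligned; no deeper coinductive reasoning is needed, since we are merely transporting the established relation $\sim$ across the two continuation closures rather than constructing a new post-fixpoint.
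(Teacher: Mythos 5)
Your proof is correct and follows essentially the same route as the paper's: both match each state of $\contClosure{y}_c$ with a bisimilar state of $\contClosure{x}_c$ (using that bisimilarity preserves the operation and relates continuations pointwise) and then transfer the pairwise condition of Definition~\ref{d:par} via transitivity of $\sim$. Your version just spells out the induction on the closure and the three-way case split more explicitly than the paper does.
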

\begin{proof}
    Suppose $x$ is parallelizable. That means all pairs in $\langle x \rangle_C$ are either bisimilar, or contain a $par$ state. Because bisimulation requires all transitions to be bisimilar, any state in $\langle y \rangle_C$ has a bisimilar state in $\langle x \rangle_C$. So, for any $a, b$ in $\langle y \rangle_C$ where $\op(a) \not= par$ and $\op(b) \not= par$, has $a', b'$ in $\langle x \rangle_C$ with $a \sim a' \sim b' \sim b$. Bisimulation is transitive, so $a \sim b$. Thus, $y$ is parallelizable. The reverse case is analoguous.
\end{proof}

Given a subtyping relation $\le$, type equivalence is generally defined as the derived equivalence relation $x \equiv y$ iff $x \le y$ and $y \le x$. We defined bisimulation separately from simulation, but it coincides with this derived equivalence.

The proposition claimed that for any states $x$ and $y$, $x \sim y$ if and only if $x \sqsubseteq y$ and $y \sqsubseteq x$.
\begin{proof}
    Recall that bisimulation is symmetric, so if $x \sim y$ then $\delta(x)(\alpha) \sim \delta(y)(\alpha)$ and $\delta(y)(\alpha) \sim \delta(x)(\alpha)$ for any $\alpha \in dom(x)$. It's easy to confirm that the bisimulation relation is a simulation. So, $x \sim y$ implies $x \sqsubseteq y$, but also $y \sim x$, by symmetry, thus $y \sqsubseteq x$.
    
    Suppose $x \sqsubseteq y$ and $y \sqsubseteq x$, then co- and contravariance do not matter anymore: $\delta(x)(\alpha) \sqsubseteq \delta(y)(\alpha)$ and $\delta(y)(\alpha) \sqsubseteq \delta(x)(\alpha)$ for any $\alpha \in dom(x)$. Thus, there is a symmetrical simulation such that $x \mathrel{R} y$ and $y \mathrel{R} x$. In general, any symmetrical simulation is also a bisimulation, as can be seen from their respective definitions.
\end{proof}

\Cref{prop:properties_of_bisim} also claimed that $x \mathrel\bot y$ and $x \mathrel\bot z$ implies $y \mathrel\sim z$.

\begin{proof}
    By definition of duality, $\sigma(x) = \overline{\sigma{y}}$ and $\sigma(x) = \overline{\sigma{z}}$. Thus, $\sigma(y) = \sigma(z)$. Duality of the transitions follow from a coinductive analysis. Let $f = \delta(x)$, $g = \delta(y)$ and $h = \delta(z)$, then $f(\alpha) \mathrel\bot g(x)$ and $f(\alpha) \mathrel\bot h(\alpha)$ for all $\alpha \in dom(f)$ (which is equal to $dom(g)$ and $dom(h)$). The coinductive hypothesis implies $g(\alpha) \sim h(\alpha)$. Bisimilarity of $y$ and $z$ follows directly.
\end{proof}

The third claim, that $x \mathrel\bot y$ and $y \mathrel\sim z$ implies $x \mathrel\bot z$, is proven similarly.

\subsection*{Decidability of Bisimulation, Simulation, Duality and Parallelelizabilty}
We only claim to decide these properties for finitely generated coalgebra. This was motivated by saying the coalgebra of types is finitely generated (Lem.~\ref{lem:finite_types}).
\begin{proof}
    Every transition in the coalgebra of types goes to a strictly smaller subexpression of the source, or from $un\ p$ to $lin\ p$. Any subexpression of $un\ p$ is also a subexpression of $lin\ p$. So the generated coalgebra of an expression contains at most 2 states per subexpression. An expression has only finitely many subexpressions, so the generated coalgebra of any expression is finite.
\end{proof}

We then proof the decidability theorem for bisimulation.
\begin{proof}
    A relation $R$ is a postfixpoint of $g_\sim$ if $(a, b) \in  g_\sim(R)$ for all $(a, b) \in R$. This involves computing a pre-image, which is, in general, not easy. Because we're not interested in pair $g_\sim(R)$ that is not in $R$, and because $g_\sim(R) = c^*(f_\sim(R))$, $R$ is a postfixpoint if $(c(a), c(b)) \in f_\sim(R)$ for all $(a, b) \in R$. When $R$ is finite, the latter is decidable---either trivially or by the assumption that $\le_D$ is decidable. 
    
    Given that, by assumption, there are only finitely many states to be transitioned to, the algorithm described above can only add finitely many pairs before reaching a relation that either is a bisimulation or can never be made into a bisimulation.
\end{proof}

The decidability of parallelizability for a finite set is straightforward.
\begin{proof}
If $Y$ is finite, the set $Y \times Y$ of all pairs is finite. We can enumerate all such pairs, and decide whether it either is in the bisimilarity relation or contains a $par$ state, in finite time. Once we encounter a pair for which this does not hold, we know $x$ is not parallelizable. If we have checked all pairs and not encountered such a a counter-example, we know $x$ is parallelizable.
\end{proof}
Each continuation is a transition, so for any state $x$ such that $\langle x \rangle$ is finite, $\langle x \rangle_C$, i.e. the smallest set closed under continuations and containing $x$, is also finite. Thus, parallelizability of $x$ is decidable.

The algorithm, and decidability proof, for duality and simulation are analogous to that for bisimulation.

\subsection*{Type Rules and Simulation}
Rather than proof Theorem~\ref{thm:simulation_sub_typing} directly, we proof a slightly more general result.

\begin{lemma}\label{lem:un_sim}
    Let $T$ be a type with subtype $U \sqsubseteq T$:
    \begin{enumerate}
        \item $\un(T)$ if and only if $\un(U)$
        \item $\lin(T)$ if and only if $\lin(U)$
    \end{enumerate}
\end{lemma}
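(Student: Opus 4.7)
The plan is to derive both statements from a single structural observation about the monotone map $h_{\sqsubseteq}$ that defines similarity: every clause of $h_{\sqsubseteq}$ relates only pairs whose operations agree. Consequently, $U \sqsubseteq T$ should force $\op(U) = \op(T)$, from which both claims follow immediately since $\un$ and $\lin$ are defined purely in terms of $\op$.

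Concretely, I would first unfold the definition of similarity: since $\sqsubseteq$ is a post-fixpoint of $h_{\sqsubseteq}$, from $U \sqsubseteq T$ we obtain $(c(U), c(T)) \in h_{\sqsubseteq}(\sqsubseteq)$. Now I inspect the seven clauses listed in Figure~\ref{fig:subtyping-map}: each clause pairs states of matching shape, namely both with operation $\data$, both with $\branch$, both with $\bsc$, both with $\End$, or both with $\PAR$. Hence, whichever clause witnesses $(c(U), c(T))$, it must be the case that $\op(U) = \op(T)$.

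With the equality of operations in hand, the first claim is immediate from the definition of $\un$: we have $\un(T)$ iff $\op(T) \in \{\PAR, \End, \bsc\}$, and similarly for $U$; since the two operations coincide, one side holds iff the other does. The second claim then follows at once because $\lin(T)$ is by definition $\neg\un(T)$, and likewise for $U$, so $\lin(T) \iff \lin(U)$ is the contrapositive of part~(1).

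I do not expect any real obstacle here; the argument is essentially a one-line inspection of $h_{\sqsubseteq}$. The only care needed is to make sure I enumerate all clauses of Figure~\ref{fig:subtyping-map} so no case is overlooked—in particular, confirming that the $\bsc$ clause relates two $\bsc$ states (possibly with different underlying basic types $d \le_D d'$), so that even in the subtyping case for base types the operation label is preserved.
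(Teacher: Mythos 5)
Your proof is correct and follows essentially the same route as the paper's: the paper likewise observes that unrestrictedness is determined purely by the operation and that a subtype always shares its supertype's operation, which is exactly what your clause-by-clause inspection of $h_{\sqsubseteq}$ establishes. Your version just spells out the case analysis that the paper leaves implicit.
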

\begin{proof}
    The two statements are equivalent, as $\lin(T)$ is defined as $\lnot \un(T)$. Whether a type is unrestricted is purely determined by its operation. A subtype always has the same operation as the supertype; consequently, $\un(T)$ if and only if $\un(U)$.
\end{proof}

\begin{definition}
    Let $\Gamma$ and $\Delta$ be two contexts. We say $\Delta$ \emph{simulates} $\Gamma$ if their domains are equal and $\Delta(x) \sqsubseteq \Gamma(x) $ for every variable $x$ in their domain.
\end{definition}

In other words, $\Delta$ simulates $\Gamma$ if they contain the same variables and any type in $\Delta$ is a subtype of that variable's type in $\Gamma$.

\begin{theorem}
    Let $\Gamma$ and $\Delta$ be two contexts, such that $\Delta$ simulates $\Gamma$. The judgement $\Gamma \vdash P$ implies $\Delta \vdash P$.
\end{theorem}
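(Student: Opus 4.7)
The plan is to prove the statement by induction on the derivation of $\Gamma \vdash P$, with a case analysis on the last rule applied (from \cref{fig:type_rules}). In each case I re-apply the same typing rule with $\Delta$ in place of $\Gamma$, after choosing suitable auxiliary contexts. The supporting facts that carry every case are: (i) \cref{lem:un_sim}, which lifts $\un(\Gamma)$ to $\un(\Delta)$ pointwise and also underwrites the copy clause of the context-split definition of \cref{fig:ctxsplit}; (ii) the preceding appendix lemma stating that subtypes of parallelizable $\PAR$ states are parallelizable; (iii) the co/contravariant shape of $h_\sqsubseteq$ from \cref{fig:subtyping-map}; and (iv) reflexivity and transitivity of $\sqsubseteq$ from \cref{prop:properties_of_bisim}.

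The easy cases are \textsc{T-Inact}, \textsc{T-Rep}, \textsc{T-Res}, \textsc{T-Sel}, and \textsc{T-Unpack}; each only needs its designated supporting fact plus an immediate appeal to the inductive hypothesis. For \textsc{T-Res} in particular, freshness of the two names $x,y$ bound by $(\nu xy)$ lets me reuse the exact types $T,U$ from the original derivation, so the duality premise $T \mathrel\bot U$ is preserved verbatim and the inductive hypothesis applies to $\Delta, x{:}T, y{:}U$, which simulates $\Gamma, x{:}T, y{:}U$ by reflexivity on the two new entries.

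The communication cases \textsc{T-In}, \textsc{T-Out}, and \textsc{T-Branch} propagate subtyping from the $x{:}T$ entry of the context into the premise. If $\Delta$ replaces $x{:}T$ by $x{:}T'$ with $T' \sqsubseteq T$, the definition of $h_\sqsubseteq$ forces $T'$ to carry the same operation as $T$, so $c(T')$ takes the form $(?, f')$, $(!, f')$, or $(\&, L', f')$, with continuations covariant ($f'(*) \sqsubseteq f(*)$) and data covariant for input but contravariant for output, and with $L' \subseteq L$ in the branching case. Each side-condition ($f(1) \sqsubseteq U$ for \textsc{T-In}, $U \sqsubseteq f(1)$ for \textsc{T-Out}, $L \subseteq L_2$ for \textsc{T-Branch}) is then re-established by transitivity or set inclusion, while the premise derivation is obtained from the inductive hypothesis applied to a context in which $x{:}f(*)$ has been replaced by $x{:}f'(*)$.

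The main obstacle will be \textsc{T-Par}. The original derivation relies on a context split $\Gamma = \Gamma_1 \circ \Gamma_2$, and I must supply a matching split $\Delta = \Delta_1 \circ \Delta_2$ before invoking the inductive hypothesis. The plan is to define $\Delta_i$ as the restriction of $\Delta$ to $\dom(\Gamma_i)$; this simulates $\Gamma_i$ pointwise by construction. The delicate step is checking that $\Delta = \Delta_1 \circ \Delta_2$ is a valid split: variables appearing in both halves can only arise from unrestricted entries in $\Gamma$ (by the rules in \cref{fig:ctxsplit}), and \cref{lem:un_sim} then guarantees those entries are also unrestricted in $\Delta$, so the copy clause of context split indeed applies. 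Two appeals to the inductive hypothesis and an application of \textsc{T-Par} then close the case.
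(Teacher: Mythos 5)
Your proposal is correct and follows essentially the same route as the paper's proof: induction on the typing derivation with a per-rule case analysis, using \cref{lem:un_sim} for the unrestrictedness side-conditions, the parallelizability-of-subtypes lemma for \textsc{T-Unpack}, the co/contravariant shape of $h_\sqsubseteq$ plus transitivity for the communication rules, and the same-domain split of $\Delta$ for \textsc{T-Par}. No gaps.
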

\begin{proof}
    If $\Gamma \vdash P$ there must be a tree of inference rules, with $\Gamma \vdash P$ as the conclusion of the root, for which all premises hold. We will show that this tree can be translated into a valid inference tree for $\Delta \vdash P$, by induction on the structure of that tree.
    \begin{itemize}
        \item \textsc{T-Inact} forms the leaves of any inference tree, and thus the base case. Context $\Gamma$ only contains unrestricted types. All subtypes of unrestricted types are unrestricted (see Lemma~\ref{lem:un_sim}), so $\Delta$ only contains unrestricted types and $\Delta \vdash 0$ holds.
    
        \item \textsc{T-Par} Types are not changed in a context split, so when the same split (i.e., $\Delta = \Delta_1 \circ \Delta_2$ such that $\Delta_i$ and $\Gamma_i$ have the same variables) is used, the context $\Delta_i$ simulates $\Gamma_i$. We can use the induction hypothesis for both premises.
        
        \item \textsc{T-Rep} Premise $\Delta \vdash P$ is exactly the hypothesis. $\un(\Delta)$ follows from the same reasoning as for \textsc{T-Inact}.
        
        \item \textsc{T-Res} We can choose to introduce the same types in the translated tree. The context $\Delta, x : T, y : U$ simulates $\Gamma, x : T, y : U$, so the hypothesis applies.
        
        \item \textsc{T-In} is more complicated. Our goal is to show all premises hold for $\Delta = \Delta_1, x: T'$
        \[ 
            c(T') = (?, f') 
            \qquad
            \Delta_1, x: f'(*), y : U \vdash P
            \qquad 
            f'(1) \sqsubseteq U
        \]
        We know they hold for the context $\Gamma = \Gamma_1, x : T$ in the original tree
        \[ 
            c(T) = (?, f) 
            \qquad
            \Gamma_1, x: f(*), y : U \vdash P
            \qquad 
            f(1) \sqsubseteq U
        \]
        We also know that type $T'$ is a subtype of $T$. Therefore,
        \[
            c(T') = (?, f')
            \qquad 
            f'(*) \sqsubseteq f(*)
            \qquad
            f'(1) \sqsubseteq  f(1)
        \]
        Which tells us, by transitivity, that $f'(1) \sqsubseteq U$. By assumption, $\Delta$ simulates $\Gamma$; variable $x$ was removed from both, so $\Delta_1$ simulates $\Gamma_1$. Because the continuation $f'(*)$ simulates $f(*)$, context $\Delta_1, x: f'(*), y : U$ simulates $\Gamma_1, x: f(*), y : U$. The latter is defined, so neither $x$ nor $y$ are in $\Gamma_1$; By simulation, they cannot be in $\Delta_1$, so the translated context is also defined. The final premise 
        \[\Delta_1, x: f'(*), y : U \vdash P\]
        follows by induction, so all premises hold.
        
        \item \textsc{T-Out} The same argument as \textsc{T-In}, except that output is contravariant. For the same functions $f$ and $f'$, state $f(1)$ simulates $f'(1)$, instead of the other way around. Even so, the related premise $U \sqsubseteq f'(1)$ is also reversed, so the conclusion still stands.
        
        \item \textsc{T-Branch} The premises we would like to prove, for $\Delta = \Delta_1, x : T'$, are
        \[ 
            c(T') = (\&, L_3, f') 
            \qquad
            L_3 \subseteq L_2
            \qquad
            \Delta_1, x: f'(l) \vdash P_l \quad \text{for all } l \in L_3 
        \]
        The original inference tree tells us, for $\Gamma = \Gamma_1, x : T$
        \[ 
            c(T) = (\&, L_1, f) 
            \qquad
            L_1 \subseteq L_2
            \qquad
            \Gamma_1, x: f(l) \vdash P_l \quad \text{for all } l \in L_1 
        \]
        Furthermore, because $T' \sqsubseteq T$
        \[
            c(T') = (\&, L_3, f') 
            \qquad
            L_3 \subseteq L_1
            \qquad
            f'(l) \sqsubseteq f(l) \quad \text{for all } l \in L_3 
        \]
        The premise $L_3 \subseteq L_2$ is a simply consequence of transitivity. Because $L_3$ is a subset of $L_1$, anything that holds for all elements in $L_1$ must hold for all elements in $L_3$. 
        \[
            \Gamma_1, x: f(l) \vdash P_l \quad \text{for all } l \in L_3 
        \]
        By the original context simulation, context $\Delta_1, f'(l)$ simulates $\Gamma_1, f(l)$ for all $l \in L_3$. Combined with the hypothesis, this implies
        \[
            \Delta_1, x: f'(l) \vdash P_l \quad \text{for all } l \in L_3 
        \]
        Which was the last unproven premise; all premises of the translated rule hold.
        
        \item \textsc{T-Sel} This time our goal is to show, for $\Delta = \Delta_1, x : T'$
        \[ 
            c(T') = (\oplus, L_2, f') 
            \qquad
            \Delta_1, x: f'(l) \vdash P_l
            \qquad
            l \in L_2
        \]
        given that, for $\Gamma = \Gamma_1, x : T$
        \[
            c(T) = (\oplus, L_1, f) 
            \qquad
            \Gamma_1, x: f(l) \vdash P_l
            \qquad
            l \in L_1
        \]
        The simulation tells us
        \[ 
            c(T') = (\oplus, L_2, f') 
            \qquad
            L_1 \subseteq L_2
            \qquad
            f'(l) \sqsubseteq f(l) \quad \text{for all } l \in L_1 
        \]
        $L_1$ is a subset of $L_2$, so $l \in L_1$ implies $l \in L_2$. The second premise follows from the hypothesis, as the translated context $\Delta_1, x : f'(l)$ simulates the original $\Gamma_1, x : f(l)$.
        
        \item \textsc{T-Unpack} Simple consequence of \cref{lem:sim_par_implication}. The original type is parallelizable, thus the subtype must also be parallelizable. The other premise follows directly from the induction hypothesis.
    \end{itemize}
    We have shown the inductive hypothesis to be valid for all rules of the inference tree, including the base case, so the hypothesis holds.
\end{proof}

\subsection*{Algorithmic Typechecking}
The complete set of rules for algorithmic type checking are listed in \cref{fig:type_checking}.

The proof of decidability (\cref{thm:algo_decidability}) for finitely generated coalgebra is as follows.
\begin{proof}
    The input of the algorithm is a finite context $\Gamma$ and a process, a finite expression, $P$. Just like a proof of well-formedness is a tree of type rules, an execution of the algorithm is a tree of algorithmic rules.
    For any non-\textsc{Unpack} node in the tree, the rule removes some element from the process(es) to be recursively type checked. The process of any such node is thus strictly larger than the concatenation of all its childrens' processes. Because a process is a finite expression, one can only remove finitely many elements; hence, there can only be finitely many of these non-\textsc{Unpack} nodes in the tree.
    
    For example, \textsc{T-Par} checks a process $P \mid Q$. Its children check $P$ and $Q$, and $length(P) + length(Q) < length(P \mid Q)$, in terms of their string concatenation.
    
    \textsc{A-Unpack} does not change the process, but it does change the type of a variable. Because we assumed finitely generated coalgebra, each $par$ state can either be unpacked into a non-$par$ state, or forms a finite cycle of purely $par$ states. In the former case, the algorithm proceeds with one of the other, non-\textsc{A-Unpack}, rules. In the latter case we know the channel in question does not allow any interactions. Because the algorithm only tries to unpack types of variables which the process in question interacts with, detecting such a cycle immediately allows the algorithm to reject. There are finitely many variables in a context, finitely many non-\textsc{Unpack} nodes in the tree and finitely many \textsc{Unpack} nodes per regular node. Thus, there are finitely many nodes in total.
    
    All of the non-recursive premises are decidable (see \cref{thm:bisimulation_decidability}). As such, a finite tree corresponds to an execution that finishes in finite time.
\end{proof}

Correctness (\cref{thm:algo_correctness}) is generally broken down in two parts: \emph{soundness} and \emph{completeness}. An algorithm is sound if every accepted program is valid (the right-to-left implication) and it is complete if every valid program---annotated with the correct types---is accepted (left-to-right). Let us begin by formalizing the algorithm output.

\begin{lemma}[Algorithmic monotonicity]
    If $\Gamma_1 \vdash P ; \Gamma_2$, then
    \begin{enumerate}
        \item $\Gamma_2 \subseteq \Gamma_1$, and
        \item $\Un(\Gamma_2) = \Un(\Gamma_1)$
    \end{enumerate}
\end{lemma}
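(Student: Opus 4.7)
The plan is to prove both statements simultaneously by induction on the derivation of $\Gamma_1 \vdash P ; \Gamma_2$, proceeding case-by-case through the rules of \cref{fig:type_checking}. The two base-style cases, \textsc{A-Inact} and \textsc{A-Rep}, are immediate because they force $\Gamma_2 = \Gamma_1$. For \textsc{A-Par}, I would chain the induction hypotheses applied to the two premises: subset inclusion follows by transitivity of $\subseteq$, and equality of the unrestricted portion follows by transitivity of equality.

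The rules that require real work are the ones that introduce fresh variables and then eliminate them via the context difference operator $\div$, namely \textsc{A-Res}, \textsc{A-In}, \textsc{A-Out}, \textsc{A-Sel}, \textsc{A-Branch}, and \textsc{A-Unpack}. For all of these the template is the same. The induction hypothesis applied to the premise gives $\Gamma_2 \subseteq \Gamma'$ and $\Un(\Gamma_2) = \Un(\Gamma')$, where $\Gamma'$ extends $\Gamma_1$ by the freshly introduced bindings. For the subset claim, since those freshly introduced variables are removed by $\div$ in the conclusion, the remaining entries lie in $\Gamma_1$. For the unrestricted claim, I would rely on a small auxiliary observation about $\div$: by the definition in \cref{fig:context_difference}, $\Gamma \div \{x\}$ is defined only if $x \notin \dom(\Gamma)$ or $\Gamma(x)$ is unrestricted, and in either case $\Un(\Gamma \div \{x\}) = \Un(\Gamma) \setminus \{(x, \Gamma(x))\}$ (interpreting the removed entry as empty when $x \notin \dom(\Gamma)$). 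Applied repeatedly, this precisely cancels the contributions to $\Un$ of the freshly introduced variables, so the equation collapses to $\Un(\Gamma_1)$.

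The trickiest rule is \textsc{A-Unpack}, because the output reintroduces $x$ with the original par type $T$ after having just removed it. Here the premise is typed with $x : f(*)$, so IH gives $\Un(\Gamma_2) = \Un(\Gamma_1, x: f(*))$. A case split on whether $f(*)$ is unrestricted is needed: if it is, then $x$ must appear in $\Gamma_2$ with that type (otherwise IH would fail), $\Gamma_2 \div x$ is defined, and removing $x$ cancels $(x, f(*))$ from the unrestricted portion, after which re-adding $(x, T)$ with $\un(T)$ (which holds since $T$ is a par state) restores equality with $\Un(\Gamma_1, x : T)$; if $f(*)$ is linear, then $x \notin \Un(\Gamma_2) = \Un(\Gamma_1)$, so $x$ must be absent from $\Gamma_2$, $\Gamma_2 \div x = \Gamma_2$, and again the final addition of $(x, T)$ produces the required equation.

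The main obstacle I anticipate is the bookkeeping for \textsc{A-Branch}: the rule checks every branch $P_l$ and produces several contexts $\Gamma_l$ that must agree up to the binding of $x$; I would have to justify that IH applied independently to each branch indeed yields contexts that coincide on every variable other than $x$, so that $\Gamma_l \div x$ is unambiguous and the argument above applies uniformly. This is forced implicitly by the rule's well-formedness requirement but has to be spelled out carefully to avoid circularity with the definedness of the conclusion.
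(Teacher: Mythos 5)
Your proposal follows essentially the same route as the paper: a structural induction on the execution tree in which the only delicate cases are the rules that introduce bindings and then discharge them with $\div$, and your case split on whether $f(*)$ is unrestricted in \textsc{A-Unpack} is precisely the content of the paper's (much terser) elaboration of that rule, which argues that removing $x$ from both sides of the induction hypothesis and re-adding $x:T$ preserves both the inclusion and the equality of unrestricted parts. The paper spells out only \textsc{A-Unpack} and leaves all other cases implicit, so your extra bookkeeping for the $\div$ operator and for the agreement of the per-branch contexts in \textsc{A-Branch} goes beyond what the paper records but does not change the argument.
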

\begin{proof}
    The proof is an induction on the structure of the execution tree. We elaborate on \textsc{A-Unpack} Suppose $\algo{\Delta_1, x : T}{P}{(\Delta_2 \div x), x: T}$, for some unrestricted $T$. We start from the premise of the rule
    \[ \algo{\Delta_1, x : \delta(T)(*)}{P}{\Delta_2}\]
    By induction, 
    \begin{align*}
        \Delta_2 &\subseteq \Delta_1, x : \delta(T)(*) \\
        \Un(\Delta_2) &= \Un(\Delta_1, x : \delta(T)(*))
    \end{align*}
    Neither of these relations is invalidated by removing $x$ from both contexts.
    \begin{align*}
        \Delta_2 \div x  &\subseteq \Delta_1 \\
        \Un(\Delta_2 \div x) &= \Un(\Delta_1)
    \end{align*}
    Nor by adding the same $x: T$ pair to both sides.
    \begin{align*}
        (\Delta_2 \div x), x : T  &\subseteq \Delta_1, x : T \\
        \Un((\Delta_2 \div x), x : T) &= \Un(\Delta_1, x : T)
    \end{align*}
\end{proof}


In our proof of soundness, we need algorithmic linear strengthening. In \textsc{A-Par} the entire context is passed along to the first process, but the type rules require a strict split of linear variables. Linear variables that are still present in the output (thus, not referenced in the process) are safe to remove from the input context.

\begin{lemma}[Algorithmic linear strengthening]
    If $\algo{\Gamma_1, x : T}{P}{\Gamma_2, x : T}$, with $\lin(T)$, then also $\algo{\Gamma_1}{P}{\Gamma_2}$
\end{lemma}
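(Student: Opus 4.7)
\bigskip

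\noindent\textbf{Proof plan.} The proof goes by induction on the algorithmic derivation of $\algo{\Gamma_1, x:T}{P}{\Gamma_2, x:T}$, using the algorithmic monotonicity lemma established just above (in particular, that the output context is always a subcontext of the input) as the main supporting fact. The governing intuition is that, because $T$ is linear and $x:T$ appears unchanged in the output, $x$ cannot have been the subject of any rule that consumed or transformed a linear resource along the way. Hence, at every node of the derivation, $x:T$ is merely threaded through and can be safely removed from every context in the tree.

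\smallskip

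\noindent\textbf{Case analysis by last rule.} For the purely structural rules (\textsc{A-Inact}, \textsc{A-Par}, \textsc{A-Rep}) the argument is routine. For instance, in \textsc{A-Par} with conclusion $\algo{\Gamma_1, x:T}{P \mid Q}{\Gamma_3, x:T}$, the intermediate context $\Gamma_2'$ produced by checking $P$ must still contain $x:T$ by monotonicity (since $\Gamma_3, x:T \subseteq \Gamma_2'$), so we can write $\Gamma_2' = \Gamma_2'', x:T$ and apply the induction hypothesis to both subderivations. For \textsc{A-Res} one first uses $\alpha$-equivalence to ensure the bound names $y,z$ are fresh and distinct from $x$, then peels off $x:T$ from the premise's input and output, invokes the IH, and reapplies the rule; commutativity of the division operator $\mathbin{\div}$ with the disjoint $x$ delivers the matching output.

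\smallskip

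\noindent\textbf{The communication and unpacking rules.} For \textsc{A-In}, \textsc{A-Out}, \textsc{A-Branch}, \textsc{A-Sel}, and \textsc{A-Unpack} the key observation is that $x$ cannot be the subject of the rule. If $x$ were the channel, the rule would divide it out of the output (or, for \textsc{A-Unpack}, require $\op(T) = \PAR$, contradicting $\lin(T)$); in either case $x:T$ could not reappear in the output unchanged, contradicting our hypothesis. If $x$ were the bound variable of \textsc{A-In} or the object of \textsc{A-Out}, freshness (or the fact that objects are never re-added to the output) again precludes $x:T$ from occurring in the output. Hence in every case $x$ belongs to the ``other'' part of the context, it is threaded untouched through the premise, the IH applies, and the rule can be re-fired with the stripped contexts, yielding the same output modulo the missing $x:T$. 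A small but important verification is that $\Gamma \div S$ commutes with stripping $x:T$ whenever $x \notin S$, so dividing out $\{w\}$ or $\{w,v\}$ in the reassembled conclusion produces exactly $\Gamma_2$.

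\smallskip

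\noindent\textbf{Main obstacle.} The only subtle step is ruling out the degenerate cases where $x$ coincides with a subject or bound variable of the last rule; this is where linearity of $T$ is genuinely used (otherwise an unrestricted $x$ could legitimately be re-added by \textsc{A-Unpack} or shared through \textsc{A-Par}). Once those cases are excluded by the output-preservation argument sketched above, the remaining reasoning is essentially bookkeeping on contexts, and the inductive step closes uniformly across the rules.
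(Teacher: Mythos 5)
Your proof follows essentially the same route as the paper's: induction on the execution tree, using algorithmic monotonicity to see that $x:T$ is threaded through every intermediate context, observing that a linear $x:T$ surviving unchanged to the output forces $x$ to be distinct from the subject and bound/object variables of each rule, and checking that $\div$ commutes with stripping $x:T$. The only cosmetic differences are that you make the exclusion of the degenerate cases (notably \textsc{A-Unpack}) explicit where the paper leaves it implicit, and that your parenthetical about an unrestricted $x$ being ``shared through \textsc{A-Par}'' is slightly off (the algorithmic rule threads rather than splits the context), neither of which affects the argument.
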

\begin{proof}
    The proof requires an inductive analysis on the structure of the execution tree. Let us detail two cases, the rest are done in a similar fashion.
    
    When the root of the execution tree is $\textsc{A-Par}$, suppose that
    \[ \algo{\Delta_1, x : T}{P \mid Q}{\Delta_3, x: T}\]
    Then, by premise of the rule
    \[ \algo{\Delta_1, x : T}{P}{\Delta_2, x: T} \qquad\text{and}\qquad \algo{\Delta_2, x : T}{Q}{\Delta_3, x: T} \]
    Note that monotonicity and $x: T$ in the output context imply that $x : T$ is element of the input and intermediate contexts as well. We can use induction on both to get $\algo{\Delta_1}{P}{\Delta_2}$ and $\algo{\Delta_2}{Q}{\Delta_3}$, which imply $\algo{\Delta_1}{P \mid Q}{\Delta_3}$.
    
    When the root is $\textsc{A-In}$, suppose 
    \[\algo{\Gamma_1, x : T, z : V}{z(y : U).P}{\Gamma_2, x : T}\]
    The premise of the rule tells us, for the continuation type $V_* = \tr(V)(*)$
    \[ 
    \algo{\Gamma_1, x : T, y: U, z : V_*}{P}{\Gamma_3, x : T}
    \]
    The inductive hypothesis lets us remove $x$ from both sides
    \[ 
        \algo{\Gamma_1, y: U, z : V_*}{P}{\Gamma_3}
    \]
    The rule specifies that $\Gamma_2, x : T = (\Gamma_3, x : T) \div \{z, y\}$, for some $\Gamma_3$. The variable $x$ is preserved through the difference, so it must be distinct from $z$ and every $y_i$.
    \[
        (\Gamma_3, x : T) \div \{z, y\} = (\Gamma_3 \div \{z, y\}), x : T
    \]
    Hence, $\Gamma_3 \div \{z, y\} = \Gamma_2$; the desired result follows directly.
    \[\algo{\Gamma_1, z : V}{z(y : U).P}{\Gamma_2}\]
\end{proof}

Finally, we can proof soundness.

\begin{theorem}
    $\algo{\Gamma_1}{P}{\Gamma_2}$ and $\un(\Gamma_2)$ implies $\Gamma \vdash erase(P)$.
\end{theorem}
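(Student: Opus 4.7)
The plan is to proceed by induction on the derivation of the algorithmic judgement $\algo{\Gamma_1}{P}{\Gamma_2}$, carrying the side assumption $\un(\Gamma_2)$ through each case. Each algorithmic rule in \cref{fig:type_checking} has a direct counterpart in \cref{fig:type_rules}, so the induction reduces to showing, for every rule, that the typing premises required by the counterpart can be recovered from the algorithmic premises via the inductive hypothesis, once the unrestrictedness assumption has been propagated appropriately.

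For the local rules A-Inact, A-Rep, A-Sel, A-Branch, A-In, A-Out, A-Res and A-Unpack, the output has the shape $\Gamma' \div V$ for some finite set $V$ of variables. The key observation is that the rules of \cref{fig:context_difference} only permit removal of an entry $x : T$ when either $\un(T)$ or $x \notin \dom(\Gamma')$; consequently $\un(\Gamma' \div V)$ forces the pre-difference context $\Gamma'$ itself to be unrestricted. This lets the inductive hypothesis fire on the recursive premise, after which the matching typing rule is reconstructed verbatim, using $erase$ being a homomorphism on the unannotated constructors and being the obvious forgetful map on the annotated ones. The only extra ingredient appears in A-Res, where T-Res additionally demands $T \mathrel{\bot} \overline{T}$; this is exactly \cref{prop:duality_fun}. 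In A-Unpack one further uses that $\un(T)$ holds whenever $\op(T) = \PAR$, so $\un(\Gamma_2)$ is preserved through the restoration of $x : T$ in the output.

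The main obstacle is A-Par, because it threads the context linearly through its two premises $\algo{\Gamma_1}{P}{\Gamma_2}$ and $\algo{\Gamma_2}{Q}{\Gamma_3}$ without explicitly producing the context split that T-Par demands. The plan is to reconstruct the split \emph{after the fact}. Define $\Gamma_P$ to consist of all unrestricted entries of $\Gamma_1$ together with those linear entries of $\Gamma_1$ that do not occur in $\Gamma_2$, and set $\Gamma_Q := \Gamma_2$. By algorithmic monotonicity, $\Gamma_2 \subseteq \Gamma_1$ and the two contexts share their unrestricted part, so $\Gamma_1 = \Gamma_P \circ \Gamma_Q$ is a legal split under \cref{fig:ctxsplit}. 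Iterated application of algorithmic linear strengthening removes from the first premise every linear entry that passes unused from $\Gamma_1$ straight into $\Gamma_2$, producing a derivation $\algo{\Gamma_P}{P}{\Gamma'_P}$ whose output $\Gamma'_P$ retains no linear entries and is therefore unrestricted; the inductive hypothesis then delivers $\Gamma_P \vdash \mathit{erase}(P)$. For the second premise, the standing assumption $\un(\Gamma_3)$ feeds the inductive hypothesis directly to yield $\Gamma_2 \vdash \mathit{erase}(Q)$. Rule T-Par combines the two into $\Gamma_1 \vdash \mathit{erase}(P) \mid \mathit{erase}(Q)$, which is $\mathit{erase}(P \mid Q)$, completing the case and the induction.
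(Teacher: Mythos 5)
Your proposal is correct and follows essentially the same route as the paper: induction on the algorithmic derivation, propagating $\un(\Gamma_2)$ backwards through the context-difference operator in the local cases, and handling \textsc{A-Par} by using algorithmic monotonicity together with iterated linear strengthening to reconstruct the context split $\Gamma_1 = (\Gamma_1 - \Lin(\Gamma_2)) \circ \Gamma_2$ (your $\Gamma_P \circ \Gamma_Q$ is exactly this split). The only difference is presentational: you spell out the observation that $\un(\Gamma' \div V)$ forces $\un(\Gamma')$, which the paper leaves implicit.
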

\begin{proof}
    Cases other than \textsc{A-Par} are proven with a straightforward induction; let us illustrate the procedure with \textsc{A-Branch}. Let $c(T) = (\&, L_1, f)$ and suppose that
    \[ \algo{\Gamma_1, x: T}{x \rhd \{l : P_l\}_{l \in L_2}}{\Gamma_2} \]
    The premise says that $\algo{\Gamma_1, x : f(l)}{P_l}{\Gamma_2}$ for every $l \in L_1$. By induction, $\Gamma_1, x : \tr(T)(l) \vdash erase(P_l)$ for the same $l \in L_1$. The result is directly implied by \textsc{T-Branch}.
    
    Let us elaborate on \textsc{A-Par}. Suppose that $\algo{\Gamma_1}{P \mid Q}{\Gamma_3}$. We know that both processes are accepted, as $\algo{\Gamma_1}{P}{\Gamma_2}$ and $\algo{\Gamma_2}{Q}{\Gamma_3}$. 
    Obviously, any type in $\Lin(\Gamma_2)$ is linear, so we can strengthen the first premise: $\algo{\Gamma_1 - \Lin(\Gamma_2)}{P}{\Gamma_2 - \Lin(\Gamma_2)}$. The output is trivially unrestricted (all linear types were removed), so we can apply recursion on $P$. Similarly, $\Gamma_3$ is unrestricted by assumption, so we can recurse on $Q$ as well.
    \begin{align*}
        \Gamma_1 - \Lin(\Gamma_2) &\vdash erase(P) \\
        \Gamma_2 &\vdash erase(Q) \\
    \end{align*}
    Monotonicity tells us that $\Gamma_1 = (\Gamma_1 - \Lin(\Gamma_2)) \circ \Gamma_2$ is in the context split relation. We can conclude $\Gamma \vdash erase(P \mid Q)$
\end{proof}

Just like we can strengthen a context by removing variables, we can weaken it by adding variables. The algorithm can accept any\footnote{Recall Barendregt's convention, we assume the added variable does not have the same name as any bound variable of the process being checked} added variables, regardless of linearity. 

\begin{lemma}[Algorithmic Weakening]
    If $\algo{\Gamma_1}{P}{\Gamma_2}$, then $\algo{\Gamma_1, x : T}{P}{\Gamma_2, x : T}$ for any pair $x: T$
\end{lemma}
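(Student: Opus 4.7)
The plan is to proceed by induction on the height of the algorithmic derivation $\algo{\Gamma_1}{P}{\Gamma_2}$. As announced by the footnote, we invoke Barendregt's convention so that $x$ does not occur free or bound in $P$; we also assume $x \notin \dom(\Gamma_1)$, since otherwise $\Gamma_1, x:T$ is not a well-formed context. Under these freshness assumptions, intuitively the new binding $x:T$ is never touched by any rule and can be threaded unchanged through the derivation.

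Before starting the induction I would establish a small auxiliary fact about the context-difference operator of Figure~\ref{fig:context_difference}: \emph{if $y \notin F$ and $\Gamma \div F$ is defined, then $(\Gamma, y:U) \div F$ is also defined and equals $(\Gamma \div F), y:U$.} This is a straightforward induction on the rules defining $\div$, and is what allows the fresh $x:T$ to commute past every context-trimming step that appears in the conclusions of A-Res, A-In, A-Out, A-Sel, A-Branch and A-Unpack.

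For the inductive step I would inspect each of the eight rules of Figure~\ref{fig:type_checking}. Rules A-Inact and A-Rep are immediate, since their conclusions merely require input and output to coincide, which is preserved when the same $x:T$ is appended to both. Rule A-Par threads the context through the two premises: applying the IH to each gives $\algo{\Gamma_1, x:T}{P}{\Gamma_2, x:T}$ and $\algo{\Gamma_2, x:T}{Q}{\Gamma_3, x:T}$, so that A-Par itself yields the desired judgement. For A-Res, A-In, A-Out, A-Sel and A-Branch, freshness ensures that $x$ is distinct from every name introduced by the rule (the two bound endpoints, the read variable, or the subject of selection/branching), so the IH applies to each premise and the auxiliary $\div$-lemma lets us push the fresh $x:T$ past the context-difference step in the conclusion. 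In A-Branch, freshness also guarantees that the common ``up to a potential $(x',U_l)$ pair'' agreement on the recursive outputs is preserved when the same $x:T$ is added to each branch's output.

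The A-Unpack case is the only one where a variable's type is changed rather than simply introduced or discarded, but again the variable in question is the subject $y$ of the unpacking, which by freshness differs from $x$; the IH extends the premise with $x:T$, and the final restoration $(\Gamma_2 \div y), y:T_y$ commutes with $x:T$ by the auxiliary lemma. I expect the correct bookkeeping of $\div$ to be the only genuine obstacle worth spelling out; once that lemma is in place, the whole induction is routine because the algorithmic rules handle the context in a linear, read-once manner, so an unused fresh assumption simply passes through.
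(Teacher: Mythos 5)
Your proposal is correct and follows essentially the same route as the paper: an induction on the structure of the algorithmic derivation, with the fresh binding $x:T$ threaded unchanged through every rule (the paper only details the \textsc{A-Par} case and calls the rest ``a fairly simple inductive analysis''). The auxiliary fact you isolate --- that a fresh binding commutes with the context-difference operator $\div$ --- is exactly the bookkeeping the paper leaves implicit, and it is a sound and worthwhile addition.
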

\begin{proof}
    A fairly simple inductive analysis. We detail a single case.
    
    Suppose that $\algo{\Gamma_1}{P \mid Q}{\Gamma_3}$. We can apply induction to both premises, yielding $\algo{\Gamma_1, x : T}{P}{\Gamma_2, x : T}$ and $\algo{\Gamma_2, x : T}{P}{\Gamma_3, x : T}$. Therefore, $\algo{\Gamma_1, x : T}{P \mid Q}{\Gamma_3, x : T}$.
\end{proof}

\begin{theorem}[Algorithmic Completeness]
    If $\Gamma_1 \vdash P$, then there exists a $P'$ with $erase(P') = P$, $\algo{\Gamma_1}{P}{\Gamma_2}$ and $\Gamma_2$ unrestricted
\end{theorem}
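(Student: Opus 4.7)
The plan is to proceed by induction on the derivation of $\Gamma_1 \vdash P$, building the annotated process $P'$ case by case so that the algorithmic rule \textsc{A-X} mirrors the typing rule \textsc{T-X} used at the root of the derivation. The inductive invariant is that $P'$ satisfies $erase(P') = P$, the algorithm returns some $\Gamma_2$, and $\Gamma_2$ is unrestricted. Algorithmic monotonicity (which forces $\Gamma_2 \subseteq \Gamma_1$ and $\Un(\Gamma_2) = \Un(\Gamma_1)$) is used throughout to recognise that when the output is unrestricted, it must coincide with the unrestricted part of the input.

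Most cases translate directly. The base case \textsc{T-Inact} sets $P' = \0$ and uses that $\Gamma_1$ is unrestricted by the side condition. For \textsc{T-In}, \textsc{T-Out}, \textsc{T-Branch} and \textsc{T-Sel}, annotate the input/selection with the type $U$ chosen by the typing rule and apply the matching algorithmic rule; the subtyping and label-set premises transfer verbatim, and the output context differences are well defined because the inductive output is unrestricted. For \textsc{T-Rep}, monotonicity together with unrestrictedness of both $\Gamma_1$ and the inductive output forces them to be equal, which is exactly the condition of \textsc{A-Rep}. For \textsc{T-Unpack}, \textsc{A-Unpack} applies directly and reinserts the unrestricted $par$ type into the output.

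The case \textsc{T-Res} requires a small detour: the typing rule only requires $T \mathrel{\bot} U$, but \textsc{A-Res} uses the canonical dual $\overline{T}$. From $T \mathrel{\bot} \overline{T}$ (Proposition~\ref{prop:duality_fun}) and $T \mathrel{\bot} U$, Proposition~\ref{prop:properties_of_bisim} gives $U \sim \overline{T}$, and then Corollary~\ref{cor:bisim_type_equivalence} lets us replace $y : U$ with $y : \overline{T}$ in the typing premise before invoking the inductive hypothesis. The resulting output has $x$ and $y$ either absent or of unrestricted type, so $\Gamma_2 \div \{x,y\}$ is defined and remains unrestricted.

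The main obstacle will be \textsc{T-Par}, because the typing rule splits the context $\Gamma = \Gamma_1 \circ \Gamma_2$ into two disjoint linear halves while \textsc{A-Par} threads the full context through $P'$ first and then continues with the residual for $Q'$. From the inductive hypothesis we obtain $\algo{\Gamma_1}{P'}{\Delta_1}$ and $\algo{\Gamma_2}{Q'}{\Delta_2}$ with $\Delta_1, \Delta_2$ unrestricted. Monotonicity combined with unrestrictedness of $\Delta_1$ forces $\Delta_1 = \Un(\Gamma_1)$, which by the definition of context split coincides with $\Un(\Gamma_2)$. Applying algorithmic weakening to the first judgement with the linear part of $\Gamma_2$ (free of name clashes by Barendregt's convention) yields $\algo{\Gamma_1 \circ \Gamma_2}{P'}{\Gamma_2}$, which chains with the second judgement via \textsc{A-Par} to produce $\algo{\Gamma}{P' \mid Q'}{\Delta_2}$ with $\Delta_2$ unrestricted, as required.
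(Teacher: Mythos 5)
Your proposal is correct and follows essentially the same route as the paper: induction on the typing derivation, with \textsc{T-Res} handled exactly as in the paper (via Proposition~\ref{prop:duality_fun}, Proposition~\ref{prop:properties_of_bisim} and Corollary~\ref{cor:bisim_type_equivalence} to swap $y:U$ for $y:\overline{T}$) and \textsc{T-Rep} via monotonicity. The paper's own write-up only details those two cases, but your treatment of \textsc{T-Par} --- identifying the inductive output with $\Un(\Gamma_1)=\Un(\Gamma_2)$ and then applying algorithmic weakening to recover $\Gamma_2$ as the intermediate context --- is precisely what the weakening lemma stated just before the theorem is there for.
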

\begin{proof}
    An inductive analysis on the inference tree of type rules.
    
    Suppose $\Gamma_1 \vdash\ !P$. By induction on $P$ we know $\algo{\Gamma_1}{P'}{\Gamma_2}$ for some $P'$ such that $P = erase(P')$. Context $\Gamma_1$ is unrestricted, by premise of the type rule, and $\Gamma_2$ is a subset containing at least all unrestricted variables. Hence, the two must be equal, implying $\algo{\Gamma_1}{!P'}{\Gamma_2}$ with unrestricted $\Gamma_2$.
    
    Suppose the root of the inference tree is scope restriction, i.e., $\Gamma \vdash P$ with $P = (\nu xy) Q$. By that rule, there exists some types $T$ and $U$ such that $\Gamma, x : T, y : U \vdash Q$ and $T \bot U$. By induction there is a $Q'$ such that $Q = erase(Q')$.  
    We define $P' = (\nu x y : T)Q'$, then $P = erase(P')$. By \cref{prop:duality_fun} and \ref{prop:properties_of_bisim}, $U \mathrel\sim \overline{T}$. \Cref{cor:bisim_type_equivalence} implies $\Gamma, x : T, y : \overline{T} \vdash Q$. The desired result follows directly from the preceding sentence and the inductive hypothesis.
\end{proof}
Correctness, \cref{thm:algo_correctness}, follows directly from soundness and completeness.

\end{document}